\numberwithin{equation}{section}
\newtheorem{theorem}{Theorem}[section]
\newtheorem{proposition}[theorem]{Proposition}
\qed\end{trivlist}}
\newcommand{\R}{\ensuremath{\mathbb{R}}}
\newcommand{\Z}{\ensuremath{\mathbb{Z}}}
\newcommand{\N}{\ensuremath{\mathbb{N}}}
\renewcommand{\epsilon}{\varepsilon}
\newcommand{\mc}{\mathcal}
\newcommand{\rmi}{\mathrm{i}}
\newcommand{\rmd}{\mathrm{d}}
\newcommand{\rmO}{\mathrm{O}}
\title{}
\author{}
\begin{document}

\begin{center}
{\fontsize{15}{15}\fontfamily{cmr}\fontseries{b}\selectfont{Anchored spirals in the driven curvature flow approximation
}}\\[0.2in]
Nan Li and Arnd Scheel\footnote{The authors acknowledge partial support by the National Science Foundation through grant  NSF DMS-2205663. } \\
\textit{\footnotesize University of Minnesota, School of Mathematics,   206 Church St. S.E., Minneapolis, MN 55455, USA}\\[3mm]
\emph{In memory of Claudia Wulff}
\end{center}

\begin{abstract}\noindent 
We study existence, asymptotics, and stability of spiral waves in a driven curvature approximation, supplemented with an anchoring condition on a circle of finite radius.  We analyze the motion of curves written as graphs in polar coordinates, finding spiral waves as rigidly rotating shapes. The existence analysis reduces to a planar ODE and asymptotics are given through center manifold expansions. In the limit of a large core, we find rotation frequencies and corrections starting form a problem without curvature corrections using geometric singular perturbation theory. Finally, we demonstrate orbital stability of spiral waves by exploiting a comparison principle inherent to curvature driven flow. 
\end{abstract}

\section{Introduction and main results}

Spiral waves are both fascinating and ubiquitous in excitable and oscillatory media. They represent topological defects in the phase of spatio-temporal oscillations and act as pacemakers, sending out periodic wave trains with an intrinsic, selected frequency; see\cite{MR4580296}, \cite{MR0025140}, \cite{MR0572965}, \cite{Belmonte1997}, \cite{desai_kapral_2009}, and references therein.
%
%
Existence of spiral waves is known only in some special cases. In oscillatory media, existence was shown in a limiting regime of balanced linear and nonlinear dispersion, first in $\lambda-\omega$-systems \cite{MR0588502}, \cite{MR0639765}, \cite{MR0665385}, \cite{MR0359550}, \cite{MR0639764}
and then more generally near a Hopf bifurcation in reaction-diffusion systems  \cite{MR1638058}.
In the case of excitable media, there do not appear to be rigorous results on the existence of spiral waves; see however \cite{KEENER1986307}, \cite{TYSON1988327}, \cite{BERNOFF1991125}
for asymptotic results and \cite{Barkley_cookbook}
for selection recipes. 

On the other hand, it was recognized early on that wave fronts are well described by purely geometric evolution equations, at least in a limit of small curvature. At leading order, one describes the evolution of a spiral arm locally as motion in a normal direction to the front with speed given by $V+D\kappa$, where $V$ is the speed of propagation of a planar interface, $\kappa$ a signed curvature, and $D>0$ a line tension parameter. The description of spiral waves by a single line segment is incorrect near the center of rotation, in the core region, where the line segment terminates. Alternative descriptions using two line segments, one for the wave front and one for the wave back run into similar difficulties in this core region, where front and back merge \cite{KEENER1986307,BERNOFF1991125}. 

We focus here on \emph{anchored} spiral waves. In this situation, an inhomogeneity or a hole in the domain break the translation symmetry and the spiral filament attaches to the boundary of this core region. In the simplest scenario, the filament is attached to a circle with a fixed contact angle and rotates locally around the circle. One observes globally a filament that curls up and converges to a spiral wave with a selected frequency. 

The existence of spiral waves in such a scenario appears to be well recognized in the physics literature \cite{MIKHAILOV1991379,MR1257848,MR4656059,YAMADA1993153}, and there are a number of more recent existence proofs in the mathematical literature, also for more general relations between curvature and normal velocity \cite{MR2341216,MR2094384,MR2187763,MR1629103}. These results are particularly concerned with situations when the spiral tip, rather than rotating on a fixed circle, follows more intricate trajectories such as epicycloids. This intriguing phenomenon, spiral meander and drift, was first related to Euclidean symmetries by Dwight Barkley in \cite{PhysRevLett.72.164}. A first rigorous analysis in Claudia Wulff's doctoral thesis \cite{wulffthesis,wulffthesispub} identified the difficulties associated with a rigorous bifurcation analysis in a corotating frame due to apparent infinite velocities at infinity. Both this  and some of her later work as well as work from other groups \cite{fssw,ssw2,ssw3,ssw1,GLM} analyze the complexity induced by instability or heterogeneity in the medium. Left open in those works is the relevance (or rather the apparent irrelevance)  of the presence of  essential spectrum on the imaginary axis: all reductions rely crucially on spectral gap assumptions, which fail for Archimedean spirals; see also \cite{MR4580296} for a review and discussion of spectral properties. 

Our work complements these results in several aspects. First, we analyze the existence problem in polar coordinates, which gives us more direct access to the shape of spirals in physical space, with a more direct relation to the partial differential equations for which the curvature flow arises as a singular limit.  We also analyze the large-core limit, when the size of the hole to which the spiral attaches is large, and find expansions for the frequency of the spiral. Lastly, our coordinate choice allows us to study stability of spiral waves. The literature is surprisingly scarce on stability results for spiral waves, with conceptual considerations in  \cite{MR4580296};
see also references therein for instability mechanisms. Partial stability results based on matched asymptotics calculations are available in the  case of $\lambda-\omega$-systems \cite{MR0665385}
with however little insight into nonlinear stability. In that respect the result here can be thought of as the first nonlinear stability result for spiral waves, although we recognize that the simplification of a curvature approximation and anchoring simplify the stability problems significantly without obvious generalizations to reaction-diffusion systems. 

To state our main results, we consider a curve $\gamma(t,s)$ in $\R^2\setminus \{|x|< R\}$ parameterized by arclength $s\geq0$ and time $t$ which evolves with velocity $V+D\kappa$ in the normal direction $N=-\gamma_{ss}/|\gamma_{ss}|$. Here, $\kappa=\langle\gamma_{ss},N\rangle$ is the signed curvature, $V>0$ the velocity of a straight line segment, and $D>0$ a coefficient modeling line tension. We assume that the curve is anchored with $|\gamma(t,0)|=R$ and $\gamma_s(t,0)$ perpendicular to the circle $\{|x|=R\}$.
%
\begin{figure}[ht]
    \centering
    \includegraphics[width=.3\textwidth]{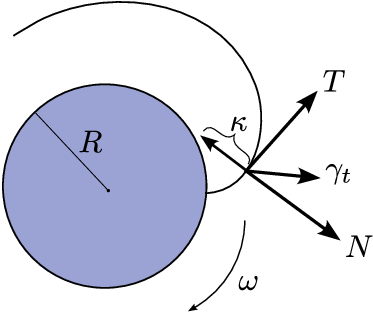}
    \caption{Schematic picture of spiral filament anchored at a disc of radius $R$, evolving pointwise with velocity $\gamma_t$, with effective normal velocity given by $V+\kappa D$. The normal motion can lead to an effective rigid rotation with angular frequency $\omega$.}
    \label{fig:enter-label}
\end{figure}
We say $\gamma(t,s)$ is rigidly rotating with frequency $\omega$ if 
\[
\gamma(t,s)=R_{\omega t}\gamma(0,s),\qquad \text{where } R_\varphi =\left(\begin{array}{cc} \cos\varphi& \sin\varphi\\ -\sin\varphi &\cos\varphi \end{array}\right),
\]
that is, $\omega>0$ corresponds to clockwise rotation; see Figure \ref{fig:enter-label}.
Moreover, we refer to $\gamma_0(s)=\gamma(0,s)$ as an asymptotically Archimedean spiral if its intersection with rays is asymptotically linear,
\[
R_{\varphi} \gamma_0(s)\cap \R_+\times\{0\}= \{(r_j(\varphi),0),j\in\N\}, \quad \varphi\in[0,2\pi)],
\]
with $R\leq r_0(\varphi)<r_1(\varphi)<\ldots$, with $r_j'(\varphi)>0$, and with
\[
\lim_{j\to\infty} r_{j+1}(\varphi)-r_j(\varphi) = 2\pi/\ell,\quad \text{ for some }\ell>0.
\]
The inverse distance $\ell$, which is independent of $\varphi$, is an asymptotic wavenumber. We say the Archimedean spiral is rotating outward if $\omega>0$. 
\begin{theorem}[Existence]\label{theorem existence}
    For all $D,V,R>0$, there exists an outward rotating asymptotically Archimedean spiral $\gamma_0$ with frequency $\omega=\omega_\mathrm{sp}(D,V)>0$.
\end{theorem}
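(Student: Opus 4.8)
The plan is to look for the rigidly rotating spiral as a graph $\phi=\phi(r)$ in polar coordinates on $r\in[R,\infty)$ and to turn the rigid-rotation relation $\langle\gamma_t,N\rangle=V+D\kappa$ into an ordinary differential equation in $r$. Writing $\gamma(r)=r(\cos\phi(r),\sin\phi(r))$ and letting $\beta(r)\in[0,\pi/2)$ denote the angle between the tangent and the radial direction, one has $dr/ds=\cos\beta$ and $r\,d\phi/ds=\sin\beta$, so the curvature is $\kappa=\cos\beta\,\beta'+\sin\beta/r$ while the normal velocity generated by clockwise rotation is $\omega r\cos\beta$ up to sign. Because $\phi$ enters only through its derivative (rotational invariance), the problem closes on $\beta$ alone, and after fixing orientations consistently with the stated conventions it reduces to the scalar equation
\[
\beta'=\frac{V}{D\cos\beta}-\frac{\tan\beta}{r}-\frac{\omega r}{D},\qquad \beta\in[0,\tfrac{\pi}{2}),\ r\ge R.
\]
I view this as a planar system in the $(r,\beta)$ strip with $\omega$ a parameter. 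The anchoring condition --- tangent perpendicular to $\{|x|=R\}$ --- is exactly $\beta(R)=0$, whereas an asymptotically Archimedean arm corresponds to a solution that exists for all $r$ and satisfies $\beta(r)\to\pi/2$ as $r\to\infty$; indeed $\psi:=\phi'=\tan\beta/r$ then tends to a limit $\ell>0$ with $\omega=V\ell$ at leading order, and $r_{j+1}-r_j\to2\pi/\ell$ follows from $\phi'\to\ell$.

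The existence of the correct $\omega$ I would obtain by a shooting argument, treating the solution $\beta(\cdot\,;\omega)$ with $\beta(R;\omega)=0$ and monitoring its behaviour as $r$ increases. Since $\beta'(R;\omega)=(V-\omega R)/D$, the relevant range is $\omega\in(0,V/R)$. Two competing failure modes occur: either $\beta$ reaches $\pi/2$ at some finite radius (the term $V/(D\cos\beta)$ blows up, so this happens transversally, and the arm becomes azimuthal so the graph closes up), or $\beta$ turns around and returns to $0$ before reaching $\pi/2$ (because $-\omega r/D$ eventually dominates). I expect the first behaviour for small $\omega$ and the second for $\omega$ close to $V/R$; each is an open condition on $\omega$ and the two are disjoint, so by connectedness of $(0,V/R)$ there is a value $\omega=\omega_{\mathrm{sp}}$ exhibiting neither, for which the solution is global and increases monotonically to $\pi/2$. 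This is the desired connecting orbit, and $\omega_{\mathrm{sp}}\in(0,V/R)$ gives outward rotation. Monotonicity $\beta\in(0,\pi/2)$ along the orbit yields $\psi=\phi'>0$, hence $r$ is a monotone function of $\phi$ and $r_j'(\varphi)=1/\psi>0$.

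It remains to verify that this connecting orbit is genuinely asymptotically Archimedean and to pin down $\ell$, which is where the center-manifold expansion enters and where I expect the main difficulty. Compactifying by $\rho=1/r$ and zooming in on the singular corner $(\rho,\beta)=(0,\pi/2)$ through $\delta=\pi/2-\beta$ with the rescaled variable $u=\delta/\rho$ (whose formal limit is $V/\omega$), I would write the flow near this corner as an autonomous planar system with a non-hyperbolic equilibrium and construct the invariant (center) manifold along which the connecting orbit must approach infinity. Expanding on this manifold gives $\psi\to\ell$ together with the curvature corrections to $\omega=V\ell$, and translating back through $\phi(r_j)=\varphi+2\pi j$ produces $r_{j+1}-r_j\to2\pi/\ell$ with $\ell>0$. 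The hard part is precisely this singular analysis at $r=\infty$: one must show that the shooting orbit selected above actually lands on the center manifold rather than leaving the strip, control the rate of approach, and rule out finite-$r$ blow-up for the selected $\omega$, so that global existence and the Archimedean asymptotics hold simultaneously. The openness and exhaustiveness of the two shooting regimes, needed for the intermediate-value step, rely on the same careful estimates near the corner.
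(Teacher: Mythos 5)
Your reduction is correct, and your route is essentially the paper's: writing the spiral as a polar graph, your variables $(r,\beta)$ with $\tan\beta=r\phi_r$ are equivalent to the paper's $(\alpha,\ell)=(1/r,\phi_r)$ via $\tan\beta=\ell/\alpha$, and your plan---shoot in $\omega$ from the Neumann condition, select $\omega_{\mathrm{sp}}$ by an intermediate-value argument between two open disjoint failure modes, then analyze the degenerate point at $r=\infty$---is exactly the architecture of the proof of Theorem \ref{theorem large core in polar coordinate}. However, the steps you defer are where the actual content lies, and as set up they contain genuine gaps. First, your dichotomy is not yet a dichotomy. ``Reaches $\pi/2$ at finite $r$'' and ``returns to $0$ before reaching $\pi/2$'' are not obviously disjoint: you must show that an orbit which returns to $\beta=0$ can never afterwards climb back to $\pi/2$. (This can be rescued: at $\beta=0$ one has $\beta'=(V-\omega r)/D$, so downward crossings occur only at $r\geq V/\omega$ and upward crossings would require $r<V/\omega$, impossible since $r$ increases; but you never make this argument.) Openness of the blow-up mode and nonemptiness of the return mode inside $(0,V/R)$ are likewise asserted, not proved; openness of ``$\beta$ reaches $\pi/2$'' essentially requires a forward-invariant large-slope region. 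The paper packages all of these properties at once into the explicitly forward-invariant regions $A_\omega$ and $B_\omega$ for \eqref{vf}: entering an open forward-invariant set is an open condition in $\omega$ by continuous dependence, forward invariance plus disjointness of the regions gives disjointness of the shooting sets, and membership is checked directly for small and large $\omega$.

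The more serious gap is in the Archimedean conclusion. Your criterion $\beta(r)\to\pi/2$ is strictly weaker than what Theorem \ref{theorem existence} demands: it is compatible with $\phi'=\tan\beta/r\to 0$, i.e.\ with the orbit of \eqref{vf} converging to the degenerate equilibrium $(\alpha,\ell)=(0,0)$, for which the spacing $2\pi/\ell$ between turns diverges and the curve is not asymptotically Archimedean. Selecting $\omega$ so that the orbit stays in the strip $0<\beta<\pi/2$ for all $r$ does not by itself exclude this scenario; some quantitative lower barrier is needed. The paper supplies it without any center manifold: the selected orbit stays outside the forward-invariant half-disc $B_{\omega_*}=\{\alpha^2+\ell^2<\omega_*^2,\ \alpha>0\}$, so by monotonicity of $\alpha$ and invariance of the $\omega$-limit set (the flow on the invariant line $\alpha=0$ has only the equilibria $\ell=0$ and $\ell=\omega_*/V$ available, and the first is excluded by $B_{\omega_*}$) the limit must be $(0,\omega_*/V)$, whence $\phi'\to\omega_*/V>0$ and the spacing converges to $2\pi V/\omega_*$. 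Your center-manifold analysis at the corner $(\rho,\beta)=(0,\pi/2)$ is the right tool for the asymptotic expansions (Proposition \ref{p:inf}) and for uniqueness, since that manifold is exponentially repelling, but---as you yourself note---it does not show that the shooting orbit lands on it. To complete your outline for bare existence you would need to add an invariant-region/limit-set argument of the above type (or an equivalent barrier construction), not merely the local analysis at infinity.
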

We restate this result in polar coordinates and give a proof in Section \ref{s:3}.

\begin{theorem}[Asymptotics --- large core]\label{theorem large core}
    For fixed $D,V>0$, and $R\gg 1$, we have the expansion
    \[\omega_\mathrm{sp} = VR^{-1} - \sigma_0 2^{1/3} D^{2/3} V^{1/3} R^{-5/3} + \rmO(R^{-7/3}), \qquad  \sigma_0=1.01879297\ldots\]
    Moreover, the shape of the spiral is given explicitly through a zero-curvature approximation, formally setting $D=0$, at leading order.
\end{theorem}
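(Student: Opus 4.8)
The plan is to reduce the rigid-rotation condition to a single first-order ODE for the shape, to recognize it as a slow--fast system in the large-core limit, and to show that the frequency correction is generated at a fold point sitting on the anchoring circle. Writing the anchored spiral as a graph $\psi=\Psi(\rho)$, $\rho\ge R$, and letting $\chi$ be the signed angle between the tangent and the radial direction, so that $\tan\chi=\rho\Psi'$, the condition $\langle\gamma_t,N\rangle=V+D\kappa$ with $\kappa=\cos\chi\,\tfrac{d\chi}{d\rho}+\tfrac{\sin\chi}{\rho}$ becomes the eikonal equation
\[
D\cos\chi\,\frac{d\chi}{d\rho}=\omega\rho\cos\chi-V-\frac{D\sin\chi}{\rho},
\]
(signs following the paper's conventions), to be solved on $[R,\infty)$ with the anchoring condition $\chi(R)=0$ and with $\chi\to-\pi/2$ as $\rho\to\infty$, corresponding to the asymptotically Archimedean far field. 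Setting $\tilde\rho=\rho/R$, $\Omega=\omega R/V$ and introducing the dimensionless parameter $\epsilon=D/(VR)$, which is small exactly in the large-core regime, this takes the singularly perturbed form $\epsilon\cos\chi\,\chi_{\tilde\rho}=\Omega\tilde\rho\cos\chi-1-\epsilon\tilde\rho^{-1}\sin\chi$.

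I would first extract the leading order and locate the obstruction. At $\epsilon=0$ the critical manifold is $\cos\chi=1/(\Omega\tilde\rho)$, which is precisely the zero-curvature ($D=0$) spiral asserted in the theorem; it furnishes the shape away from the core and, combined with $\chi=0$ at the anchor, forces $\Omega=1$, i.e.\ $\omega_\mathrm{sp}=VR^{-1}+\rmO(R^{-5/3})$. The critical manifold has a square-root (fold) singularity where $\cos\chi=1$, and the anchoring point sits exactly at this fold; this non-hyperbolic turning point is where the $\epsilon=0$ reduction fails and where the correction is produced. I expect the selection to originate here rather than in the far field, since the far field admits an Archimedean spiral for every admissible $\Omega$: the frequency must be pinned by demanding a regular passage through the fold that also meets the anchoring condition.

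The core of the argument is a blow-up at the fold. Writing $\Omega=1+\lambda\,\epsilon^{2/3}+\ldots$ and rescaling $\tilde\rho-1=\epsilon^{2/3}\xi$, $\chi=\epsilon^{1/3}C$ by dominant balance, the ODE collapses at leading order to the Riccati equation
\[
\frac{dC}{d\xi}=\xi+\lambda-\tfrac12 C^2,\qquad C(0)=0.
\]
The substitution $C=2u'/u$ linearizes this to the Airy equation $u''=\tfrac12(\xi+\lambda)u$, that is $u_{zz}=zu$ with $z=2^{-1/3}(\xi+\lambda)$. The anchoring condition reads $u'(0)=0$, i.e.\ $\mathrm{Ai}'(z_0)=0$ at $z_0=2^{-1/3}\lambda$, while matching to the outer branch as $\xi\to\infty$ forces the decaying solution $u\propto\mathrm{Ai}$ (so that $C\to-\infty$, consistent with the outer $\chi\to-\pi/2$). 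Regularity of $C$ on the whole ray---no zero of $u$, equivalently $z_0$ above the first zero $a_1$ of $\mathrm{Ai}$---then selects the unique zero $z_0=a_1'$ of $\mathrm{Ai}'$ with $z_0>a_1$. Since $a_1'=-1.01879297\ldots=-\sigma_0$, this gives $\lambda=2^{1/3}a_1'=-2^{1/3}\sigma_0$, and undoing the scalings yields
\[
\omega_\mathrm{sp}=VR^{-1}-\sigma_0\,2^{1/3}D^{2/3}V^{1/3}R^{-5/3}+\rmO(R^{-7/3}).
\]

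Finally I would make the matching rigorous with geometric singular perturbation theory. Away from the fold, Fenichel theory tracks the attracting slow manifold (the zero-curvature spiral) with exponentially small error; near the non-hyperbolic anchoring/fold point one blows up the degenerate point so that the Riccati/Airy problem above appears as the dynamics in a chart, and one proves that the orbit connecting the anchoring condition to the outer branch through the fold exists and is unique for the single value $\lambda=-2^{1/3}\sigma_0$, carrying the next order to produce the $\rmO(R^{-7/3})$ remainder. The main obstacle is exactly this passage through the fold: the $\epsilon=0$ reduction degenerates there, and establishing existence and uniqueness of the connecting orbit---equivalently, pinning the admissible solution to the principal Airy branch and to the first zero of $\mathrm{Ai}'$---is the delicate step, whereas the outer slow-manifold estimates and the order-by-order bookkeeping are routine.
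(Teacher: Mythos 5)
Your proposal is correct and follows essentially the same route as the paper: both recognize the large-core limit as a slow--fast problem whose critical manifold is the zero-curvature (Wiener--Rosenblueth) spiral, with the anchoring condition sitting at the fold point of that manifold, and both resolve the fold via the Krupa--Szmolyan blow-up in which a Riccati equation linearized by Airy functions appears and the anchoring condition (your $C(0)=0$, the paper's crossing of $\ell_1=0$) selects the first zero of $\mathrm{Ai}'$, producing $\sigma_0$. The differences are cosmetic bookkeeping: you work with the tangent angle $\chi$ (equivalent to the paper's $(\ell,\alpha)=(\phi_r,1/r)$ variables under $\tan\chi=r\phi_r$, up to a sign convention) and the parameter $\epsilon=D/(VR)$, embedding the frequency correction $\lambda$ directly in the Riccati equation, whereas the paper rescales by $\omega$, locates where the tracked slow manifold meets $\ell_1=0$, and inverts the resulting relation $\alpha_*(\omega)$.
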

A more detailed expansion is formulated in Section \ref{s:4}, Theorem \ref{theorem existence in polar coordidate}, also with an explicit expression for the zeroth order approximation \eqref{e:wr}. While the first-order term simply reflects the travel time of the line segment around the circle of radius $R$, the correction term, which reflects a slow-down of motion due to curvature effects, is less intuitive and, to our knowledge, not documented in the literature. 
\begin{theorem}[Stability --- Informal]\label{theorem stability}
    All curves $\gamma(s)$ that are sufficiently close to $\gamma_0(s)$ constructed in Theorem \ref{theorem existence} will stay close to $\gamma_0$ for all times.
\end{theorem}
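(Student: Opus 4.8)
The plan is to work in the frame corotating with frequency $\omega_{\mathrm{sp}}$, in which the spiral $\gamma_0$ of Theorem~\ref{theorem existence} becomes a \emph{stationary} profile, and to exploit two structural features of the driven curvature flow: its invariance under rotations and the parabolic comparison principle. First I would represent curves near $\gamma_0$ as graphs in polar coordinates, writing the cumulative polar angle as a function $\psi(r,t)$ of the radius $r\in[R,\infty)$, so that $\gamma_0$ corresponds to a strictly monotone profile $\psi_0(r)$ with $\psi_0(r)\to\infty$ as $r\to\infty$. The normal-velocity law $V+D\kappa$ then transforms into a scalar quasilinear equation of the form $\psi_t=F(r,\psi_r,\psi_{rr})$ with $\partial F/\partial\psi_{rr}>0$, supplemented by the Neumann boundary condition $\psi_r(R,t)=0$ coming from perpendicular anchoring, and with $\psi_0$ a stationary solution. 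The crucial observation is that, because the flow commutes with rotations and rotations commute with the corotation, $F$ does not depend on $\psi$ itself, so the equation is invariant under the shift $\psi\mapsto\psi+c$. Hence the entire one-parameter family $\psi_0(\cdot)+c$, $c\in\R$ --- precisely the angular shifts $R_{-c}\gamma_0$ of the spiral --- consists of stationary solutions, furnishing a continuum of exact barriers.

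With these barriers in hand, orbital stability follows from a trapping argument. I would establish a comparison principle for the scalar equation: if two bounded perturbations of $\psi_0$ satisfying the Neumann condition obey $\psi^1(\cdot,0)\le\psi^2(\cdot,0)$ pointwise, then $\psi^1(\cdot,t)\le\psi^2(\cdot,t)$ for all $t>0$. This is proved by writing $w=\psi^2-\psi^1$, which solves a linear parabolic equation obtained by interpolating the nonlinearity; the \emph{absence of a zeroth-order term} in this equation is exactly the shift invariance above, and the maximum principle, with Hopf's lemma at $r=R$ absorbing the Neumann condition, then yields the ordering. Given a perturbed anchored curve with $\|\psi(\cdot,0)-\psi_0\|_\infty\le\delta$, one sandwiches the initial data as $\psi_0(r)-\delta\le\psi(r,0)\le\psi_0(r)+\delta$; since $\psi_0\pm\delta$ are stationary, comparison propagates this to $\psi_0(r)-\delta\le\psi(r,t)\le\psi_0(r)+\delta$ for all $t>0$, i.e.\ $\|\psi(\cdot,t)-\psi_0\|_\infty\le\delta$ uniformly in time. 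Translated back into physical space, the evolving curve remains within the $\delta$-neighborhood of the rotation orbit $\{R_\alpha\gamma_0\}$ of $\gamma_0$, which is the asserted stability.

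The main obstacle I anticipate is the comparison principle on the \emph{unbounded} domain $[R,\infty)$: the coefficients of the linearized equation degenerate as $r\to\infty$ (the geometry flattens while the winding is infinite), so the naive maximum principle does not apply and one must either restrict to bounded angular perturbations and invoke a Phragm\'en--Lindel\"of argument, or construct explicit sub- and supersolutions controlling the solution far out. A second, more technical, difficulty is to guarantee that the perturbed curve remains a graph over $[R,\infty)$ for all time --- that no vertical tangents or self-intersections develop --- which should again follow from the comparison principle applied to $\psi_r$ together with the monotonicity $\psi_0'>0$, but needs care at the anchoring boundary. Local well-posedness of the quasilinear parabolic problem and the parabolic regularity required to run the comparison argument are routine, given uniform parabolicity on compact $r$-intervals. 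I note that this scheme delivers Lyapunov (orbital) stability as stated, but not asymptotic stability: convergence of the phase would require an additional argument beyond the bare comparison principle.
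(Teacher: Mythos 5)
Your overall strategy is exactly the paper's: pass to the corotating frame, write curves as polar graphs with Neumann anchoring at $r=R$, observe that rotation invariance makes the equation independent of the angle variable itself so that the shifts $\phi_*+c$, $c\in\R$, form a continuum of exact equilibria, and trap the perturbed solution between $\phi_*\pm\varepsilon$ by a comparison principle. Your closing remark also matches the paper: this yields Lyapunov (orbital) stability only, and indeed the paper's discussion explains that the degenerate diffusion is too weak to heal localized perturbations, so no asymptotic convergence should be expected from this argument.

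The genuine gap is in the step you declare ``routine'': local well-posedness and the framework in which your maximum principle is supposed to run. Writing $\Phi=\phi_*+\varphi$, the perturbation equation (cf.\ \eqref{e:pdepert}) has principal part $\varphi_t=\frac{1}{r^2}(1+\cdots)\varphi_{rr}+(-1+\cdots)\varphi_r$: the drift coefficient is $\rmO(1)$ at infinity while the diffusion coefficient decays like $1/r^2$, so the first-order term is \emph{not} relatively bounded with respect to the second-order term, and uniform parabolicity on compact $r$-intervals does not produce a semigroup or maximal-regularity setting on $[R,\infty)$ in which the quasilinear terms $\varphi_r\varphi_{rr}$ can be handled perturbatively. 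The paper resolves this with a time-dependent change of the independent variable, $\rho=r+\log r-t$ for $r\geq e^t+1$ (suitably interpolated below), which removes the drift and reduces the principal part to $\psi_t=\rho^{-2}\psi_{\rho\rho}$, shown to generate an analytic semigroup on a weighted $C^0$ space; well-posedness is then obtained in the domain of that operator. Relatedly, for global existence and for your ``stays a graph'' concern, the paper does not merely invoke monotonicity of $\phi_*'$: it uses the explicit one-parameter family of rotating-wave profiles $\phi_*(\cdot\,;\omega)$ with $\omega\gtrless\omega_*$ as exact sub- and super-solutions for the \emph{derivative} $\psi_\rho$, which gives the uniform gradient bound that closes the continuation argument and keeps the equation uniformly parabolic along the solution. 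Without some such device your sketch has no barriers to compare $\psi_r$ against, and without the coordinate change (or a precise substitute along the lines of your Phragm\'en--Lindel\"of suggestion) the comparison argument has no established solution class to act on; these are exactly the points the paper's outline (Theorem \ref{theorem stability in polar coordinate}) is built around.
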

We state a precise version  and outline  a proof in Section \ref{s:5}. To prepare proofs in Sections \ref{s:3}--\ref{s:5}, we reformulate the curve evolution as a quasilinear parabolic equation for curves written as graphs in polar coordinates, next.

\section{Rotating spirals in polar coordinates}
\label{s:2}

We consider plane curves written as graphs in polar coordinates, with covering space $\R$ for the angle,
\[
\Gamma=\{\gamma(t,r)=(r\cos(\Phi(t,r)),r\sin(\Phi(t,r))\mid\,r\geq R,\,t\geq 0\},
\]
for some smooth function $\Phi:[0,\infty)\times [R,\infty)\to \R$. In order to derive the evolution equation for $\Phi$ from the normal velocity $V+D\kappa$, we first compute unit 
tangent vector and arclength $s$ from
\begin{align}
    T &= \frac{\gamma_r}{|\gamma_r|} = \frac{1}{\sqrt{1+r^2\Phi_r^2}} \left(\cos(\Phi)-r\Phi_r\sin(\Phi),\sin(\Phi)+r\Phi_r\cos(\Phi)\right),\qquad
    \frac{\rmd r}{\rmd s} &=  \frac{1}{\sqrt{1+r^2\Phi_r^2}},
\end{align}
with scalar product and induced norm from Cartesian coordinates.
We orient the normal such that it points ``downwards'' at $r=R$, when $\Phi=0,\Phi_r=0$,
\begin{align}
    N &=  \frac{1}{\sqrt{1+r^2\Phi_r^2}} (\sin(\Phi)+r\Phi_r\cos(\Phi),-\cos(\Phi)+r\Phi_r\sin(\Phi)).
\end{align}
From this, we find the signed curvature
\[\kappa = \left\langle\frac{\rmd T}{\rmd s},N\right\rangle=-\frac{r\Phi_{rr}+r^2\Phi_r^3+2\Phi_r}{(1+r^2\Phi_r^2)^{3/2}}
,\]
using the Euclidean scalar product; see Figure \ref{fig:enter-label}, where in particular $\Phi_r=0$ and $\Phi_{rr}>0$ at $r=R$, so that $\kappa<0$ in the orientation given by $N$. We find the normal velocity of the curve by projecting
$
    \gamma_t = (-r\Phi_t \sin(\Phi),r\Phi_t \cos(\Phi)),
$
onto $N$ along $T$, and equating with  $V+D\kappa$, which gives
\begin{equation}\label{pde}
    \Phi_t = \frac{Dr\Phi_{rr}-V(1+r^2\Phi_r^2)^{3/2}+Dr^2\Phi_r^3+2D\Phi_r}{r(1+r^2\Phi_r^2)}.
\end{equation}
The curve $\gamma$ is outward rotating where $\Phi_t\cdot\Phi_r<0$. One may choose to scale space $r$ and time $t$ as
\begin{equation}\label{e:scal}
t=\tilde{t}D/V^2,\quad r= \tilde{r}{D/V}, \quad \tilde{\Phi}(\tilde{t},\tilde{r})=\Phi(t,r),
\end{equation}
to find the flow with normalized velocity and line tension $V=D=1$,
\begin{equation}\label{pde nondim}
    \tilde{\Phi}_{\tilde{t}} = \frac{\tilde{r}\tilde{\Phi}_{\tilde{r}\tilde{r}}-(1+\tilde{r}^2\tilde{\Phi}_{\tilde{r}}^2)^{3/2}+\tilde{r}^2\tilde{\Phi}_{\tilde{r}}^3+2\tilde{\Phi}_{\tilde{r}}}{\tilde{r}(1+\tilde{r}^2\tilde{\Phi}_{\tilde{r}}^2)}.
\end{equation}
in the rescaled region $\tilde{r}>\tilde{R}=R/\sqrt{D/V}$. We state results below for the unscaled version,  but will carry out proofs in the simple case $V=D=1$. 

Rigid rotation with angular velocity $\omega$ translates simply into $\Phi(t,r)=\phi(r)-\omega t$, which gives the nonlinear, second-order, non-autonomous ordinary differential in $\phi$
\begin{equation}\label{ode}
    -\omega = \frac{Dr\phi_{rr}-V(1+r^2\phi_r^2)^{3/2}+Dr^2\phi_r^3+2D\phi_r}{r(1+r^2\phi_r^2)}.
\end{equation}
Again, scaling $D=V=1$ yieldds the scaled frequency $\omega=\tilde{\omega}V^2/D$.
We can write (\ref{ode}) as a system of first order autonomous ODEs, setting $\phi_r=\ell$, and appending $\alpha=1/r$ as a dependent variable, 
\begin{align}\label{vf}
    &\begin{cases}
        \ell' = -\frac{\omega}{D} (\alpha^2 + \ell^2) + \frac{V}{D} (\alpha^2 + \ell^2)^{3/2} - 2 \alpha^3 \ell - \alpha \ell^3\\
        \alpha' = -\alpha^4
    \end{cases}
    \qquad \left(\,'\, = \frac{d}{d\tau}\right),\qquad \tau=(r^3-R^3)/3.
\end{align}
The solution $\phi(r)$ is then recovered from 
\[
\phi'(r)=\ell(\tau(r)), \qquad \phi(r)=\int_R^r \phi'(\rho)\rmd \rho.
\]
Note that the introduction of $\alpha$ as an independent variable compactifies the phase space, including $\alpha=0$ which corresponds to $r=\infty$. The new time $\tau$ regularizes the equation, removing singularities of the form $1/\alpha^2$ in the vector field that arises in ``time'' $r$. 

An asymptotically Archimedean spiral in this context is represented by a solution of (\ref{vf}) that tends to an equilibrium point on the $\ell$-axis. Outward rotation, for $-\Phi_t=\omega>0$ then corresponds to $\ell>0$. The solution satisfies the boundary condition $\phi'(R)=0$ if the trajectory originates on the $\ell=0$-axis.

The next two sections will be concerned with an analysis of this system \eqref{vf}.




\section{Existence of rigidly rotating spirals}
\label{s:3}
We analyze \eqref{vf} with boundary condition corresponding to the anchored core at $r=R$, that is, $\alpha_*=1/R$. The perpendicular contact angle is encoded in $\phi_r=0$, that is, $\ell_*=0$.

We will show that $\phi_r=\ell$ tends to an equilibrium point, that is, $\lim_{\tau \to \infty} \ell(\tau) = \omega/V>0$, which indicates that the spiral is Archimedean and outward rotating in the farfield. Moreover, the core radius $R$ uniquely determines  the angular velocity $\omega$, that is, for every $\alpha_*$ there is a unique $\omega_*$ that admits a solution with the above properties. We restate Theorem \ref{theorem existence} in terms of the equation (\ref{vf}).

\begin{theorem}\label{theorem large core in polar coordinate}
    Fix $D,V>0$ and let $(\ell(\tau;\omega),\alpha(\tau;\omega))$ denote the solution of (\ref{vf}) with initial condition $(\ell(0),\alpha(0)) = (\alpha_*,0)$ and parameter $\omega$. We then have that for every $\alpha_* > 0$ there exists a unique $\omega_*>0$ such that
    \[
    \lim_{\tau \to \infty} \ell(\tau;\omega_*) = \frac{\omega_*}{V}
    .\]
    Moreover, $\omega_*$ is strictly increasing in $\alpha_*$.
\end{theorem}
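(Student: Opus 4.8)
\emph{Plan.} Throughout I normalize $V=D=1$, so the target equilibrium value is $\ell\to\omega$, and I exploit that $\alpha$ may be treated as a \emph{known} function of $\tau$: since $\alpha'=-\alpha^4$ is independent of $\ell$ and $\omega$, one has $\alpha(\tau)=(\alpha_*^{-3}+3\tau)^{-1/3}\searrow 0$, strictly monotone for $\alpha>0$. The single remaining unknown $\ell$ then solves the scalar nonautonomous equation $\ell'=g(\tau,\ell;\omega)$ with $\ell(0)=0$, where $g$ is the numerator in \eqref{vf}. The decisive structural fact is that $g$ is strictly decreasing in $\omega$, namely $\partial_\omega g=-(\alpha^2+\ell^2)<0$, so by the scalar comparison principle the flow is monotone in the parameter: $\omega_1<\omega_2$ implies $\ell(\tau;\omega_1)\ge \ell(\tau;\omega_2)$ for all $\tau$ where both solutions exist. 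This monotonicity is the engine behind both existence and uniqueness.

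First I would analyze the limit set. The line $\alpha=0$ (i.e.\ $r=\infty$) is invariant, with reduced flow $\ell'=\ell^2(\ell-\omega)$, whose only relevant equilibrium is $P_\omega=(\omega,0)$. Linearizing the full system at $P_\omega$ gives eigenvalues $\omega^2>0$ and $0$, the center direction being forced by the degeneracy $\alpha'=-\alpha^4$. Hence $P_\omega$ is nonhyperbolic: its unstable manifold lies inside $\{\alpha=0\}$ in the $\ell$-direction, while a one-dimensional center manifold $W_\omega$ enters the physical region $\alpha>0$. A short center-manifold computation gives $W_\omega=\{\ell=\omega+\omega\alpha+\tfrac{2\omega^2-1}{2\omega}\alpha^2+\rmO(\alpha^3)\}$; since on $W_\omega$ the $\alpha$-dynamics is still $\alpha'=-\alpha^4$, the branch $\alpha>0$ flows forward into $P_\omega$, so every such trajectory satisfies $\ell\to\omega$. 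As there is no stable direction transverse to $W_\omega$, the intersection $W_\omega\cap\{\alpha>0\}$ is exactly the set of forward trajectories converging to $P_\omega$, and it is a single orbit $\Theta_\omega$. Note $W_\omega$ lies \emph{above} the level $\ell=\omega$ for small $\alpha>0$, so $\Theta_\omega$ approaches $P_\omega$ from $\ell>\omega$.

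For existence I would run a shooting argument in $\omega$ at fixed $\alpha_*$. Using the nullcline $\{g=0\}$, which passes through both $P_\omega$ and the axis point $(\ell,\alpha)=(0,\omega)$, together with the invariant line $\{\alpha=0\}$, I would build an isolating region around $\Theta_\omega$ and sort trajectories into three fates: blow-up ($\limsup_\tau\ell=+\infty$), convergence to $P_\omega$, and decay (the orbit drops below $\ell=\omega$, is trapped beneath the nullcline, and tends to $\ell\to 0^+$ or crosses into $\ell<0$). For small $\omega$ the right-hand side is positive and drives $\ell$ into the cubic regime $\ell'\approx \ell^2(\ell-\omega)$, forcing blow-up; for $\omega>\alpha_*$ one has $\ell'=\alpha^2(\alpha-\omega)<0$ on $\{\ell=0\}$ for all $\tau$, so $\{\ell\le 0\}$ is forward invariant and the orbit decays. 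By the $\omega$-monotonicity the blow-up set is a down-interval and the decay set an up-interval, and continuous dependence shows both are open; hence the threshold $\omega_*=\sup\{\text{blow-up}\}=\inf\{\text{decay}\}$ is attained and, by the trapping region, the threshold orbit converges to $P_{\omega_*}$, giving $\ell(\tau;\omega_*)\to\omega_*$.

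For uniqueness and the monotonicity of $\omega_*$ I would pass to $\alpha$ as independent variable, writing the convergent orbit as a graph $\ell=L(\alpha;\omega)$ with terminal data $L\to\omega$ as $\alpha\to0^+$; here $dL/d\alpha=G(\alpha,L;\omega)$ with $\partial_\omega G=(\alpha^2+L^2)/\alpha^4>0$. Since $L(\alpha;\omega)\sim\omega+\omega\alpha$ near $\alpha=0$, two graphs with $\omega_1<\omega_2$ are ordered there, and a no-crossing argument (at a would-be first crossing the higher-$\omega$ graph has strictly larger slope, contradicting the ordering just below it) gives $L(\alpha;\omega_1)<L(\alpha;\omega_2)$ throughout the common domain. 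The core radius $\alpha_*(\omega)$ is the first $\alpha>0$ at which $L(\cdot;\omega)$ returns to $0$; the ordering of graphs forces $\alpha_*(\omega)$ to be strictly increasing, which yields both uniqueness of $\omega_*$ for each $\alpha_*$ and the claim that $\omega_*$ is increasing in $\alpha_*$ upon inversion. The main obstacle is precisely the nonhyperbolicity of $P_\omega$: the vanishing eigenvalue means $\ell\to\omega$ is not exponential and cannot be read off from a stable-manifold theorem, so controlling the threshold orbit — ruling out that at $\omega=\omega_*$ it still blows up or decays — requires combining the center-manifold expansion with the nullcline-based trapping region; everything else reduces to the scalar comparison principle.
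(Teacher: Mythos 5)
Your overall skeleton is the same as the paper's: a shooting argument in $\omega$, monotone dependence of the flow on $\omega$ (you phrase this elegantly as a scalar comparison principle after solving $\alpha(\tau)=(\alpha_*^{-3}+3\tau)^{-1/3}$ explicitly; the paper encodes the identical fact, $\partial_\omega\ell'=-(\alpha^2+\ell^2)/D<0$, as forward invariance of the region below/above a fixed trajectory), a center-manifold analysis at the degenerate equilibrium $(\ell,\alpha)=(\omega,0)$ whose unique connecting orbit in $\alpha>0$ follows from the repelling transverse eigenvalue $\omega^2$ (your expansion $\ell=\omega+\omega\alpha+\tfrac{2\omega^2-1}{2\omega}\alpha^2+\dots$ agrees with Proposition \ref{p:inf}), and an ordered-graphs-over-$\alpha$ argument for uniqueness and for monotonicity of $\omega_*$ in $\alpha_*$ that is equivalent to the paper's. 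Those parts, and your observation that $\{\ell\le 0\}$ is absorbing once $\alpha<\omega$ (which gives the ``decay'' endpoint of the shooting interval), are sound.

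The genuine gap is in the classification of fates, i.e., in your trapping regions. The claim that an orbit dropping below $\ell=\omega$ is ``trapped beneath the nullcline'' is false: the region $\{g<0\}$ below the (inner branch of the) nullcline is \emph{not} forward invariant. On the nullcline the flow is horizontal ($\ell'=0$, $\alpha'=-\alpha^4<0$), and along the portion near the equilibrium where the nullcline is a graph $\ell=n(\alpha)$ with $n'(\alpha)\approx\omega>0$, leftward horizontal motion exits $\{g<0\}$ into $\{g>0\}$. Indeed the connecting orbit itself lies in $\{g<0\}$ just beneath the nullcline (on it $\ell'=-\omega\alpha^4+\rmO(\alpha^5)<0$), and orbits slightly above it cross the nullcline and subsequently blow up — so ``beneath the nullcline'' cannot separate decaying from blowing-up orbits, and your isolating region around $\Theta_\omega$ is never actually constructed. (The nullcline is also more complicated than you assume: writing $g=\rho^2\bigl(\rho-\omega-\rho^2u(\theta)\bigr)$ with $u(\theta)=\sin\theta\cos\theta(1+\cos^2\theta)$ shows it has two branches in the quadrant, which merge once $4\omega\,u_{\max}>1$, so your single curve from $(0,\omega)$ to $P_\omega$ is only the small-$\omega$ inner branch.) This is precisely where the paper does something different and essential: it uses regions whose invariance is checked by explicit computation for the exact vector field — the quarter disc $B_\omega=\{\alpha^2+\ell^2<\omega^2\}$, invariant for every $\omega$ since $\tfrac{\rmd}{\rmd\tau}(\alpha^2+\ell^2)=-2\alpha\omega^4$ on the circle, and the region $A_\omega$ above an explicit straight line for small $\omega$ — and it defines the shooting sets $S,L$ by \emph{entry into these regions}, which makes openness in $\omega$ immediate; your definition of blow-up as $\limsup_\tau\ell=+\infty$ leaves openness of that set unproved. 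Two further points: the asserted identity $\sup\{\text{blow-up}\}=\inf\{\text{decay}\}$ presupposes the uniqueness you are proving (it is also unnecessary — one only needs a single threshold value lying in neither open class); and the final step, convergence of the threshold orbit, should not lean on the defective trapping region but on the paper's limit-set argument: the threshold orbit stays in a compact region, $\alpha$ is strictly monotone so its $\omega$-limit set lies in $\{\alpha=0\}$, and the axis dynamics $\ell'=\ell^2(\ell-\omega)$ forces that limit set to be the single equilibrium. As written, the existence half of your proof has a real hole; the uniqueness/monotonicity half stands.
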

\begin{proof}
    We find solutions to \eqref{vf} shooting from the boundary condition $\{(\ell,\alpha)|\ell=0\}$ to the asymptotic equilibrium $\ell=\omega/V$, $\alpha=0$. It turns out that the asymptotic  equilibrium possesses a one-dimensional center manifold and is otherwise unstable. Solutions in the center-manifold converge to the equilibrium, all other initial conditions diverge. Exploiting in addition a monotonicity in $\omega$, this will show both existence and uniqueness. We start by constructing suitable invariant regions and then find the desired orbit using a somewhat standard inf/sup-construction. 
    
    Fix $\alpha_* > 0$, and assume $D=V=1$. Define
    \begin{align*}
        A_\omega &= \left\{(\alpha,\ell) : \ell > 2\omega\left(1-\frac{\alpha}{4\omega}\right), \alpha > 0, \ell > 0\right\},\\
        B_\omega &= \left\{(\alpha,\ell) : \alpha^2 + \ell^2 < \omega^2, \alpha > 0\right\},
    \end{align*}
    and, writing $\Psi_\tau(\alpha_*,\ell_*;\omega)$ for the solution $(\ell(\tau),\alpha(\tau))$ to \eqref{vf}  with initial condition $(\alpha_*,\ell_*)$ and parameter value $\omega_*$,
    \begin{align*}
        S &= \left\{\omega : \exists\, \tau \in (0,\infty)\;\text{ s.th. } \Psi_\tau(\alpha_*,0;\omega) \in A_\omega\right\},\\
        L &= \left\{\omega : \exists\, \tau \in (0,\infty)\;\text{ s.th. } \Psi_\tau(\alpha_*,0;\omega) \in  B_\omega\right\}.
    \end{align*}
    We note the following properties.
    \begin{enumerate}
        \item The slope of the vector field along the $\alpha$-axis is positive when $\alpha_* < \omega$ and  negative when $\alpha_* > \omega$, since
        \[\frac{d\ell}{d\alpha}\biggr\rvert_{\ell = 0} = -\frac{1}{\alpha^2}(\alpha - \omega).\]
        As a consequence, for any  $\alpha_* < \omega$  there exists a $\epsilon>0$ such that $\ell(\tau;\omega)<0$ for  $\tau \in (0,\epsilon]$. Similarly, for any $\alpha_* < \omega$   there exists a $\epsilon>0$ such that $\ell(\tau;\omega)>0$ for  $\tau \in (0,\epsilon]$.
        \item The $\ell$-axis is invariant since $\alpha'=0$ when $\alpha=0$.
        \item If $0 < \omega \ll 1$, then $A_\omega$ is forward invariant since for $\alpha>0$,
        \[\lim_{\omega \to 0} \,\,\frac{\rmd\ell}{\rmd\alpha}\biggr\rvert_{\ell = 2\omega\left(1-\frac{\alpha}{4\omega}\right)} = -\frac{5^{3/2} }{8\alpha}-\frac{9}{8} < -\frac{1}{2},\]
        that is, the slope of the vector field along the line segment $\ell = 2\omega\left(1-\frac{\alpha}{4\omega}\right)$ is less than the slope $-1/2$  of the line segment when $\omega$ is sufficiently small. Together with (1) and (2), the claim follows immediately.
        \item $B_\omega$ is forward invariant because the direction of the vector field points inward along the perimeter of the half circle $\{(\alpha,\ell) : \alpha^2+\ell^2 = \omega, \alpha > 0\}$, that is,
        \[\frac{\rmd}{\rmd\tau}(\alpha^2+\ell^2)\biggr\rvert_{\ell = \pm \sqrt{\omega^2 - \alpha^2}} = -\omega^4\alpha < 0.\]
        \item $S$ and $L$ are nonempty because $(\alpha_*,\infty) \subset S$ and $(0,\alpha_*/4) \subset L$.
        \item $S$ and $L$ are open because $\Psi_{\tau}(\alpha_*,0;\omega)$ depends continuously on $\omega$.
        \item $S \cap L = \emptyset$ since $A_\omega$ and $B_\omega$ are forward invariant.
        \item $S$ is bounded above: Take $\omega > \alpha$ and note that forward invariance of $B_\omega$ and (1) imply $\Psi_{\tau}(\alpha_*,0;\omega) \notin A_\omega$ for all $\tau>0$.
        \item $L$ is bounded below: Take $\omega < \alpha/2$ and note that forward invariance of $A_\omega$ and (1) imply $\Psi_{\tau}(\alpha_*,0;\omega) \notin B_\omega$ for all $\tau>0$.
    \end{enumerate}
    Thus, $\sup S<\infty$ and $\inf L>0$ exist. Set $\omega_* = \inf L$ and note that $\omega_*$ has the following properties.
    \begin{enumerate}
        \item[(i)] Since $S$ and $L$ are open and disjoint, $\omega_* \notin S \cup L $ and hence $\Psi_\tau(\alpha_*,0;\omega_*) \notin A_\omega \cup B_\omega$ for all $\tau > 0$.
        \item[(ii)] The trajectory  $\{\Psi_\tau(\alpha_*,0;\omega_*),\tau>0\}$ is contained in the bounded open region $\{(\ell,\alpha>0\}\setminus \overline{(A_{\omega_*}\cup B_{\omega_*})}=:G_{\omega_*}$ and hence global in $\tau$.
    \end{enumerate}
    \begin{figure}[ht]
        \centering
        \begin{subfigure}{0.4\textwidth}
            \includegraphics[width=.85\textwidth]{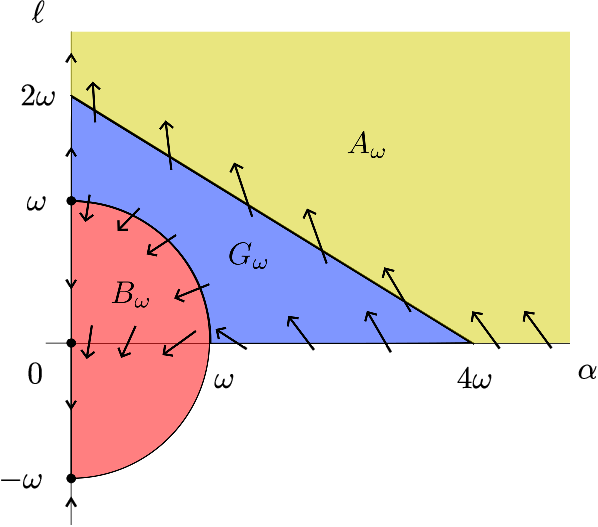}
            \caption{}
        \end{subfigure}
        \hspace{0.5in}
        \begin{subfigure}{0.4\textwidth}
            \includegraphics[width=.85\textwidth]{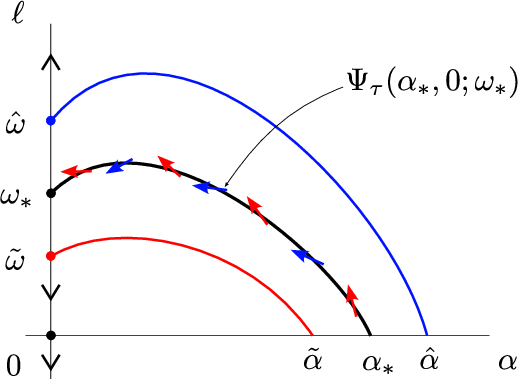}
            \caption{}
        \end{subfigure}
        \caption{(a) Invariant regions $A_\omega$ and $B_\omega$ on the phase plane. The trajectory of $\Psi_{\tau}(\alpha_*,0;\omega)$ stays in the region $G_\omega$ for $\tau > 0$. (b) The $\ell$-coordinate of equilibrium point $\omega_*$ is strictly increasing in $\alpha_*$; the region below the trajectory at $\omega_*$ is forward invariant for $\hat{\omega}>\omega_*$ (indicated by blue arrows)  showing that the associated $\hat{\alpha}$ necessarily satisfies $\hat{\alpha}>\alpha_*$; similarly the  region above the trajectory at $\omega_*$ is forward invariant for $\tilde{\omega}<\omega_*$ (indicated by red arrows)  showing that the associated $\tilde{\alpha}$ necessarily satisfies $\tilde{\alpha}<\alpha_*$.}
    \end{figure}    
    By monotonicity of $\alpha$, the limit set is therefore contained in $\overline{G_{\omega_*}}\cap\{\alpha=0\}$. Inspecting the flow on $\alpha=0$, and noting invariance of the limit set, we conclude that $\Psi_\tau(\alpha_*,0;\omega_*)\to (0,\omega_*)$ for $\tau\to\infty$. 
    

    It remains to show that $\omega_*$ is increasing in $\alpha_*$. Therefore, consider $\tilde{\alpha}<\alpha_*$. Since the center manifold is exponentially repelling, there is a unique trajectory that converges to the equilibrium  $(0,\omega_*)$ on the $\ell$-axis. Assume $\tilde{\omega}>\omega_*$. But then the region below the trajectory of $\{\Psi_\tau(\alpha_*,0;\omega_*),\tau\geq0\}$ is forward invariant for the flow with $\omega=\tilde{\omega}$, since $\omega$ appears in the equation for $\ell'$, only, with a definite negative sign. The initial condition $(\tilde{\alpha},0)$ is contained in this region, but the equilibrium $(0,\tilde{\omega})$ is not, showing that $\tilde{\omega}<\omega_*$.

    Similarly, given $\alpha_*$ and two values $\omega_1<\omega_2$ for which there exists a connection, we find that the trajectory originating at $(\alpha_*,0)$ for $\omega_2$ is contained in the region below the trajectory for $\omega_1$ and hence cannot connect to $(0,\omega_2)$ which lies above this trajectory, a contradiction. This shows strict monotonicity of $\alpha_*$ as a function of $\omega_*$ and hence establishes the uniqueness claim. 
\end{proof}
The following proposition collects some farfield asymptotics.
\begin{proposition}\label{p:inf}
    The solution $(\ell,\alpha)(\tau)$ with parameter $\omega=\omega_*$ and initial condition $\omega=\omega_*$ can be written as $\ell=\lambda(\alpha)$, where $\lambda$ possesses the (not necessarily convergent) expansion at the origin
    \[
    \lambda(\alpha)=\frac{\omega_*}{V} + \sum_{j=0}^\infty \lambda_j\alpha^j,
    \]
    with 
    \[
    \lambda_1=\frac{D \omega_* }{V^2},\quad \lambda_2=\frac{2 D^2 \omega_* ^2-V^4}{2 V^3 \omega_* },\quad \text{and }\ \lambda_3=\frac{D \left(D^2 \omega_* ^2+V^4\right)}{V^4 \omega_* }.
    \]
\end{proposition}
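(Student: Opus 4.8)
The plan is to use that $\alpha$ is strictly monotone along the connecting orbit, rewrite the orbit as a graph $\ell=\lambda(\alpha)$, and compute the $\lambda_j$ as the coefficients of the (in general only asymptotic) Taylor expansion of the center manifold at the equilibrium $(0,\omega_*/V)$. Since $\alpha'=-\alpha^4<0$ for $\alpha>0$, the map $\tau\mapsto\alpha(\tau)$ is strictly decreasing along the orbit from Theorem \ref{theorem large core in polar coordinate}, hence invertible, and $\alpha$ may be used as the independent variable. Dividing the two components of \eqref{vf}, the graph satisfies the singular first-order ODE
\[
-\alpha^4\lambda'(\alpha)=P\bigl(\alpha,\lambda(\alpha)\bigr),\qquad P(\alpha,\ell):=-\frac{\omega_*}{D}(\alpha^2+\ell^2)+\frac{V}{D}(\alpha^2+\ell^2)^{3/2}-2\alpha^3\ell-\alpha\ell^3,
\]
with boundary value $\lambda(0)=\omega_*/V$ supplied by Theorem \ref{theorem large core in polar coordinate} (so that the $j=0$ coefficient vanishes). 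Linearizing \eqref{vf} at $(0,\omega_*/V)$ gives a lower-triangular Jacobian with eigenvalue $0$, whose eigenvector has slope $D\omega_*/V^2$, and eigenvalue $\partial_\ell P|_{(0,\omega_*/V)}=\omega_*^2/(DV)>0$, whose eigenvector is the $\ell$-axis. Thus the equilibrium carries a one-dimensional center manifold and a transverse unstable direction; as $W^{cs}=W^c$ here, the converging orbit coincides near $\alpha=0$ with the graph of a $C^\infty$ center manifold, which is exactly the branch of the singular ODE through $(0,\omega_*/V)$.

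Next I would substitute the Ansatz $\lambda(\alpha)=\omega_*/V+\sum_{j\geq1}\lambda_j\alpha^j$ and match powers of $\alpha$. Because $\lambda(0)=\omega_*/V>0$, the base $\alpha^2+\lambda(\alpha)^2$ is smooth and bounded away from $0$, so the fractional-power term $(\alpha^2+\lambda^2)^{3/2}$ expands into a genuine power series by the binomial series and $P(\alpha,\lambda(\alpha))$ becomes a formal power series in $\alpha$. The right-hand side $-\alpha^4\lambda'(\alpha)$ contributes only at orders $\alpha^j$ with $j\geq4$, so for $j=1,2,3$ the matching condition reads simply that the coefficient of $\alpha^j$ in $P(\alpha,\lambda(\alpha))$ vanishes. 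The crucial structural point is that at every order $j\geq1$ the unknown $\lambda_j$ enters linearly with coefficient $\partial_\ell P|_{(0,\omega_*/V)}=\omega_*^2/(DV)\neq0$, while all other terms involve only $\lambda_1,\dots,\lambda_{j-1}$ (and, for $j\geq4$, the lower coefficient $\lambda_{j-3}$ from the right-hand side). Hence the recursion closes and determines each $\lambda_j$ uniquely. At order $\alpha^1$ one gets $\partial_\alpha P|_{(0,\omega_*/V)}+\partial_\ell P|_{(0,\omega_*/V)}\,\lambda_1=0$ with $\partial_\alpha P|_{(0,\omega_*/V)}=-\omega_*^3/V^3$, yielding $\lambda_1=D\omega_*/V^2$, which is precisely the slope of the center eigenvector; orders $\alpha^2$ and $\alpha^3$ are then a routine, if lengthy, algebraic computation producing the stated $\lambda_2$ and $\lambda_3$.

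Finally, since convergence is not claimed, I would invoke the smoothness of the center manifold: any $C^\infty$ center manifold is a graph whose Taylor coefficients satisfy exactly the recursion above, and the connecting orbit agrees with such a graph to infinite order at the equilibrium, so the $\lambda_j$ are well defined as coefficients of an asymptotic expansion even when the series diverges. The main obstacle is not conceptual but bookkeeping: correctly expanding $(\alpha^2+\lambda^2)^{3/2}$ through order three and collecting the several contributions to the $\alpha^2$ and $\alpha^3$ coefficients. A secondary point worth stating cleanly is why the connecting orbit inherits the center manifold's Taylor expansion despite the non-uniqueness of center manifolds, namely that all center manifolds agree to infinite order at the equilibrium and the orbit lies in the center-stable set $W^{cs}=W^c$.
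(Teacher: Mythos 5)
Your proposal is correct and follows essentially the same route as the paper's proof: identify the connecting orbit with the (backward extension of the) center manifold at $(0,\omega_*/V)$, substitute the power-series ansatz into the invariance equation $-\alpha^4\lambda'(\alpha)=P(\alpha,\lambda(\alpha))$, and solve the resulting linear recursion order by order, where nondegeneracy of $\partial_\ell P=\omega_*^2/(DV)$ closes each step. Your added points --- monotonicity of $\alpha$ to justify the graph representation, $W^{cs}=W^c$ to place the orbit on every center manifold, and the asymptotic (rather than convergent) nature of the expansion --- are correct elaborations of what the paper states tersely, and your order-$\alpha$ computation reproduces $\lambda_1=D\omega_*/V^2$ exactly as claimed.
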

\begin{proof}
    The trajectory is the backward extension of the center manifold at the equilibrium $(0,\omega_*/V,0)$ and the expansion can be readily computed recursively requiring invariance at any order. 
        {
    We therefore substitute the ansatz
    \[
    \ell = \frac{\omega}{V} + \lambda_1 \alpha + \lambda_2 \alpha^2 + \lambda_2 \alpha^3+\rmO(\alpha^4)
    \]
    into
    \[
    \ell' = \frac{d\ell}{d\alpha} \alpha' ,
    \]
    and substitute the Taylor expansion of the expression for $\ell'$ and $\alpha'$ in \eqref{vf} near $\ell=\omega/V$ and $\alpha=0$.  The left-hand side becomes $\ell'=-\lambda_1 \alpha^4 - \lambda_2 \alpha^5 - \lambda_3 \alpha^6$ which indicates that the coefficients of $\alpha, \alpha^2$, and $\alpha^3$ in the Taylor series of the right-hand side necessarily vanish. Solving these three equations successively for $\lambda_1,\lambda_2$, and $\lambda_3$ gives the expressions as stated.
    }
\end{proof}
We remark that the coefficient $\lambda_1$ agrees with the leading-order coefficient computed in a much more general scenario, \cite[(3.7)]{MR4580296},  if one substitutes accordingly 
\[
d_\perp=D, \quad c_\mathrm{g}=V, \quad k_*=\omega_*/V,
\]
for the effective transverse diffusivity $d_\perp$, the group velocity of wave trains emitted by the spiral  $c_\mathrm{g}$, and the asymptotic wavenumber $k_*$. The linear convergence of the wavenumber $\ell$ in $\alpha$ is of course equivalent to a convergence with $1/r$, which translates into a logarithmic divergence of the phase, identified also in  \cite[(3.7)]{MR4580296}.
It would be interesting to determine if higher-order coefficients can also be inferred from a, possibly more refined, approximation by geometric evolution equations. 

We conclude this section with some remarks on other boundary-value problems. First, problem in annuli can be studied with only minor modifications. We are then interested in curves that are anchored at an inner circle of radius $R_\mathrm{i}$ and an outer circle of radius $R_\mathrm{o}>R_\mathrm{i}$, that is, $\ell(R_\mathrm{i})=\ell(R_\mathrm{o})=0$. Our shooting approach immediately generalizes to this setup.
\begin{proposition}\label{p:finite}
Given $R_\mathrm{o}>R_\mathrm{i}>0$, there exists a unique frequency $\omega=\Omega(R_\mathrm{o},R_\mathrm{i})$ and a unique profile $\ell(r)$ with $\ell(R_\mathrm{i})=\ell(R_\mathrm{o})=0$, $\ell(r)>0$ on $(R_\mathrm{i},R_\mathrm{o})$. Moreover, fixing $R_\mathrm{i}$, $\omega(\cdot,R_\mathrm{i})$ is strictly monotonically decreasing with limits $\omega(R_\mathrm{i},R_\mathrm{i})=V/R_\mathrm{i}$, and $\omega(\infty,R_\mathrm{i})=\omega_*$ with $\omega_*$ from Theorem \ref{theorem large core in polar coordinate}.
\end{proposition}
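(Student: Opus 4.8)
The plan is to run a shooting argument in the frequency $\omega$ analogous to the proof of Theorem~\ref{theorem large core in polar coordinate}, but now recording a \emph{first return} to $\{\ell=0\}$ rather than convergence to the far-field equilibrium. Normalize $V=D=1$ and set $\alpha_\mathrm{i}=1/R_\mathrm{i}$, $\alpha_\mathrm{o}=1/R_\mathrm{o}$, so that $0<\alpha_\mathrm{o}<\alpha_\mathrm{i}$. The structural fact I would exploit is that the $\alpha$-equation in \eqref{vf} decouples: $\alpha(\tau)=(\alpha_\mathrm{i}^{-3}+3\tau)^{-1/3}$ is strictly decreasing and independent of $\ell$ and $\omega$. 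Hence $\alpha$ may serve as an independent variable, the radial segment $r\in[R_\mathrm{i},R_\mathrm{o}]$ corresponds to the compact interval $\alpha\in[\alpha_\mathrm{o},\alpha_\mathrm{i}]$ (finite $\tau$), and two trajectories launched from the common anchoring point $(\alpha_\mathrm{i},0)$ with different frequencies are naturally compared at equal $\alpha$. Since $\partial_\omega\ell'=-(\alpha^2+\ell^2)<0$, the comparison argument already used for the monotonicity claim in Theorem~\ref{theorem large core in polar coordinate} shows that these trajectories are strictly ordered: if $\omega_1<\omega_2$ then $\ell_{\omega_1}(\alpha)>\ell_{\omega_2}(\alpha)$ throughout.

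With this in hand I would define the shooting function $F(\omega)=\ell_\omega(\alpha_\mathrm{o})$ for $\omega\in[\omega_*,\alpha_\mathrm{i}]$, where $\omega_*=\omega_*(\alpha_\mathrm{i})$ is the far-field frequency from Theorem~\ref{theorem large core in polar coordinate}; recall $\omega_*<\alpha_\mathrm{i}$, so the interval is nontrivial. For $\omega\ge\omega_*$ the trajectory lies at or below the single-core connecting orbit, which is bounded on $[\alpha_\mathrm{o},\alpha_\mathrm{i}]$, so no blow-up occurs and $F$ is well defined, continuous in $\omega$, and strictly decreasing by the ordering above. At the left endpoint, $F(\omega_*)=\ell_{\omega_*}(\alpha_\mathrm{o})>0$ because the critical orbit stays in $\{\ell>0\}$ on its way to $(0,\omega_*)$. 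At the right endpoint I would note that $(\alpha_\mathrm{i},0)$ lies on the $\ell$-nullcline when $\omega=\alpha_\mathrm{i}$, and that along $\{\ell=0\}$ one has $\rmd\ell/\rmd\alpha=(\omega-\alpha)/\alpha^2>0$ for $\alpha<\omega$; as $\alpha$ decreases the orbit therefore enters and remains in $\{\ell<0\}$, giving $F(\alpha_\mathrm{i})<0$. The intermediate value theorem then yields a frequency $\Omega(R_\mathrm{o},R_\mathrm{i})\in(\omega_*,\alpha_\mathrm{i})$ with $F(\Omega)=0$, unique by strict monotonicity; uniqueness of the profile is immediate from uniqueness of the initial value problem. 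Finally, the same nullcline sign computation shows that any interior zero of $\ell$ would be a downcrossing into $\{\ell<0\}$ (possible only for $\alpha<\Omega$), after which an upcrossing back to $\{\ell=0\}$ would require $\alpha>\Omega$ and is thus impossible at the still smaller value $\alpha_\mathrm{o}$; hence $\alpha_\mathrm{o}$ is the first return and $\ell>0$ on $(\alpha_\mathrm{o},\alpha_\mathrm{i})$, as required.

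For monotonicity and the limits I would package the construction through the first-return map $\alpha_{\mathrm{ret}}(\omega)$, defined for $\omega\in(\omega_*,\alpha_\mathrm{i})$ as the $\alpha$ at which the orbit from $(\alpha_\mathrm{i},0)$ first meets $\{\ell=0\}$; the ordering of trajectories shows $\alpha_{\mathrm{ret}}$ is strictly increasing in $\omega$, and $\Omega(R_\mathrm{o},R_\mathrm{i})=\alpha_{\mathrm{ret}}^{-1}(\alpha_\mathrm{o})$. Since $\alpha_\mathrm{o}=1/R_\mathrm{o}$ decreases in $R_\mathrm{o}$, strict monotonic decrease of $\Omega$ in $R_\mathrm{o}$ follows. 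For the limits, as $\omega\to\alpha_\mathrm{i}^-$ the excursion above $\{\ell=0\}$ shrinks to the tangency at $(\alpha_\mathrm{i},0)$, so $\alpha_{\mathrm{ret}}(\omega)\to\alpha_\mathrm{i}$, whence $\Omega\to\alpha_\mathrm{i}=V/R_\mathrm{i}$ as $R_\mathrm{o}\to R_\mathrm{i}$; and as $\omega\to\omega_*^+$ continuous dependence forces $\alpha_{\mathrm{ret}}(\omega)\to0$, whence $\Omega\to\omega_*$ as $R_\mathrm{o}\to\infty$.

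The routine ingredients — decoupling of $\alpha$, the comparison principle, and the intermediate value argument — are all available from the earlier analysis. The delicate point, and the one I would spend the most care on, is the far-field limit $\omega\to\omega_*^+$: there the relevant $\tau$-interval is no longer compact, so rather than invoking continuous dependence on a finite window I must argue, via monotonicity of $\alpha_{\mathrm{ret}}$ together with the convergence of the critical connecting orbit of Theorem~\ref{theorem large core in polar coordinate} to $(0,\omega_*)$, that a positive limit of $\alpha_{\mathrm{ret}}$ would contradict that the critical orbit never returns to $\{\ell=0\}$, forcing the first return to escape to $\alpha=0$.
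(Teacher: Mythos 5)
Your proposal follows exactly the route the paper intends but never writes down: the paper offers no proof of Proposition~\ref{p:finite} beyond the remark that the shooting approach of Theorem~\ref{theorem large core in polar coordinate} ``immediately generalizes,'' and your scheme --- shooting in $\omega$ from $(\alpha_\mathrm{i},0)$, the trajectory ordering coming from $\partial_\omega\ell'=-(\alpha^2+\ell^2)<0$, and an intermediate-value argument on $[\omega_*,\alpha_\mathrm{i}]$ --- is that generalization, and the existence, uniqueness, positivity and monotonicity steps are essentially sound. The genuine gap lies in your use of the first-return map: you treat $\alpha_{\mathrm{ret}}(\omega)$ as defined on all of $(\omega_*,\alpha_\mathrm{i})$, and your far-field limit rests on this, but you never show that an orbit with $\omega>\omega_*$ returns to $\{\ell=0\}$ at all. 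Your nullcline computation shows that a crossing at $\alpha<\omega$ must go downward; it does not show that a crossing occurs. A priori the orbit could stay in the open quadrant $\{\ell>0,\alpha>0\}$ for all time and drift into the origin without ever touching $\{\ell=0\}$; that this must be excluded by an argument is illustrated by the invariant axis $\alpha=0$, where the flow $\ell'=\ell^2(\ell-\omega)$ decays to zero from above and never crosses. In particular, your proposed repair of the delicate limit $\omega\to\omega_*^+$ (``a positive limit of $\alpha_{\mathrm{ret}}$ would contradict positivity of the critical orbit'') presupposes that $\alpha_{\mathrm{ret}}(\omega)$ is finite for $\omega$ near $\omega_*$, which is precisely the missing ingredient.

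Fortunately the missing lemma is short. Suppose $\omega\in(\omega_*,\alpha_\mathrm{i})$ and the orbit never returns to $\{\ell=0\}$. By your ordering it lies below the critical orbit, which converges to $(0,\omega_*)$, so for all sufficiently small $\alpha$ it satisfies $\sqrt{\alpha^2+\ell^2}\le\omega'$ for some fixed $\omega'\in(\omega_*,\omega)$. Writing the $\ell$-equation of \eqref{vf} (with $D=V=1$) as $\ell'=(\alpha^2+\ell^2)\bigl(\sqrt{\alpha^2+\ell^2}-\omega\bigr)-\alpha\ell(2\alpha^2+\ell^2)$ and noting that the last term is nonpositive in the first quadrant, one gets $\ell'\le-(\omega-\omega')\alpha^2$ there, hence $\rmd\ell/\rmd\alpha=\ell'/(-\alpha^4)\ge(\omega-\omega')/\alpha^2$; since $1/\alpha^2$ is not integrable at $\alpha=0$, the quantity $\ell$ must reach zero at a strictly positive $\alpha$ --- a contradiction. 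With this lemma, $\alpha_{\mathrm{ret}}$ is well defined on $(\omega_*,\alpha_\mathrm{i})$ and your monotonicity and limit arguments go through. Two smaller repairs: at the right endpoint $\omega=\alpha_\mathrm{i}$ the orbit starts at a tangency, $\ell'(0)=0$, so your first-order slope computation does not yet show it enters $\{\ell<0\}$; denoting by $f(\ell,\alpha)$ the right-hand side of the $\ell$-equation, the second-order term $\ell''(0)=\alpha'\,\partial_\alpha f\big|_{(0,\alpha_\mathrm{i})}=(-\alpha_\mathrm{i}^4)(\alpha_\mathrm{i}^2)<0$ does, and it is also what justifies the strict inequality $\omega_*<\alpha_\mathrm{i}$ that you ``recall'' (alternatively, run the intermediate value argument up to some $\omega>\alpha_\mathrm{i}$, where $\ell'(0)<0$ directly). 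Finally, bounding orbits above by the critical orbit rules out only upward blow-up in your definition of $F$; downward blow-up is excluded because $\ell'>0$ once $\ell$ is sufficiently negative (for $\ell<0$ one has $(\alpha^2+\ell^2)^{3/2}-\alpha\ell^3\ge|\ell|^3$, which dominates the $-\omega(\alpha^2+\ell^2)$ term), or equivalently because orbits can enter $\{\ell<0\}$ only through $\overline{B_\omega}$, which is forward invariant and bounded.
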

Lastly, we emphasize that our approach gives uniqueness only in the class of curves that can be written as smooth graphs $\phi(r)$. We suspect that introducing a pseudo-angle $\arctan(\ell)$, one may be able to identify solutions that are no graphs in the radial variable, thus possess poles of $\ell=\phi'$.

\section{Expansion of frequency in the large-core limit}
\label{s:4}

We derive the asymptotic expansion of the spiral wave solution for large effective core radius $\tilde{R}=VR/D \to \infty$.
\begin{theorem}\label{theorem existence in polar coordidate}
    Given $\alpha_*>0$, let $\omega_*$ and $\ell=\lambda(\alpha)$ be the solution from Theorem \ref{theorem large core in polar coordinate}. We then have the expansions
    \begin{align}
        \omega_* &= V\alpha_* - \sigma_0 \sqrt[3]{2D^2 V} \alpha_*^{5/3} + \rmO(\alpha_*^{7/3}),\nonumber\\
        \lambda(\alpha) &= \sqrt{\frac{\omega_*^2}{V^2} - \alpha^2} + \rmO({\omega\alpha}), \quad \text{for } \alpha<(1-\delta)\frac{\omega_*}{V} \ \text{ and some } \delta>0,
    \end{align}
     where $\sigma_0=1.01879297\ldots$ is determined by the first zero of the derivative of the Airy function, that is, $\mathrm{Ai}'(-\sigma_0)=0$, $\mathrm{Ai}'(-\sigma)>0$ for $\sigma<\sigma_0$.
\end{theorem}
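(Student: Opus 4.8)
The plan is to treat \eqref{vf} in the large-core limit as a slow--fast system with a turning point, combining geometric singular perturbation theory in the outer region with an Airy-type inner layer that pins the frequency. Throughout I work in the scaled variables $V=D=1$ and set $\epsilon=\alpha_*=1/\tilde R\to 0$, restoring $D,V$ at the end via \eqref{e:scal}. Rescaling $\alpha=\epsilon a$, $\ell=\epsilon L$, $\omega=\epsilon W$ and passing to the fast time $s$ with $\rmd s=\epsilon^2\,\rmd\tau$ turns \eqref{vf} into
\begin{align*}
\frac{\rmd L}{\rmd s} &= -W(a^2+L^2)+(a^2+L^2)^{3/2}+\rmO(\epsilon),\\
\frac{\rmd a}{\rmd s} &= -\epsilon a^4,
\end{align*}
a standard slow--fast system whose critical manifold is the circle $a^2+L^2=W^2$. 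The linearization of the fast field transverse to this circle equals $LW$, so the circle is normally hyperbolic and repelling for $L>0$ and degenerates at the turning point $(a,L)=(W,0)$. Since the anchoring condition places the initial point at $(a,L)=(1,0)$ and $W\to 1$, the entire selection mechanism is concentrated in a neighborhood of this turning point.

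First I would handle the outer region. For $L$ bounded away from $0$, Fenichel theory provides a locally invariant slow manifold $\rmO(\epsilon)$-close to the circle $a^2+L^2=W^2$; the connecting orbit of Theorem \ref{theorem large core in polar coordinate} tracks it, which in the original variables yields the leading-order profile $\lambda(\alpha)=\sqrt{\omega_*^2/V^2-\alpha^2}+\rmO(\omega\alpha)$, i.e.\ the formal $D=0$ (zero-curvature) solution, on the stated range $\alpha<(1-\delta)\omega_*/V$. This establishes the second displayed expansion and reduces the problem to locating the frequency.

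Next comes the turning-point layer, which is the heart of the argument. Writing $\xi=\alpha-\omega$ and expanding the right-hand side of \eqref{vf} for small $\xi,\ell$ (using $\sqrt{\alpha^2+\ell^2}-\omega=\xi+\ell^2/(2\omega)+\ldots$ and discarding the genuinely higher-order terms $-2\alpha^3\ell-\alpha\ell^3$) reduces the orbit equation $\rmd\ell/\rmd\alpha=\ell'/\alpha'$ to the Riccati equation
\[
\frac{\rmd\ell}{\rmd\xi}=-\frac{\xi}{\omega^2}-\frac{\ell^2}{2\omega^3}.
\]
The substitution $\ell=2\omega^3\,w'/w$ linearizes this to Airy's equation $w''+\xi w/(2\omega^5)=0$, which after $\xi=-2^{1/3}\omega^{5/3}z$ becomes $w_{zz}=zw$, so $\ell=-2^{2/3}\omega^{4/3}\,w_z/w$. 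Three conditions then select the solution: matching to the outer circle solution as $z\to+\infty$ (where $\ell\sim\sqrt{2\omega}\sqrt{-\xi}$) forces the recessive branch $w\propto\mathrm{Ai}(z)$; the anchoring condition $\ell=0$ at the initial point forces $\mathrm{Ai}'(z_0)=0$; and the positivity $\ell>0$ along the orbit (guaranteed by Theorem \ref{theorem large core in polar coordinate}) forces $z_0$ to be the \emph{first} zero, $z_0=-\sigma_0$. Translating $z_0=-\sigma_0$ back gives the implicit relation $\alpha_*-\omega=\sigma_0\,2^{1/3}\omega^{5/3}\bigl(1+\rmO(\epsilon^{2/3})\bigr)$, and solving this fixed-point equation for $\omega$ yields $\omega_*=\alpha_*-\sigma_0 2^{1/3}\alpha_*^{5/3}+\rmO(\alpha_*^{7/3})$; undoing \eqref{e:scal} restores $\omega_*=V\alpha_*-\sigma_0\sqrt[3]{2D^2V}\,\alpha_*^{5/3}+\rmO(\alpha_*^{7/3})$.

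The main obstacle is making the inner analysis and the inner--outer matching rigorous rather than merely formal: one must control the remainder dropped in the Riccati reduction and, crucially, certify that the shooting value $\omega_*$ is pinned to the Airy eigenvalue $-\sigma_0$ to the claimed order. The cleanest route exploits the machinery already in place: the strict monotonicity of $\omega_*$ in $\alpha_*$ from Theorem \ref{theorem large core in polar coordinate} reduces the task to a two-sided bound, which I would obtain by building sub- and super-solutions of the orbit equation from Airy profiles with slightly perturbed arguments $z_0=-\sigma_0\pm C\epsilon^{2/3}$ and using the comparison and invariant-region arguments from the proof of Theorem \ref{theorem large core in polar coordinate} to sandwich the true trajectory. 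A careful bookkeeping of the overlap region, where both the Fenichel slow-manifold estimate and the Airy asymptotics $\mathrm{Ai}'/\mathrm{Ai}\sim-\sqrt z$ are valid, then closes the matching and yields the stated error order $\rmO(\alpha_*^{7/3})$.
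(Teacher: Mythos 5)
Your proposal is correct in its mathematical core and shares the paper's skeleton: rescale \eqref{vf} by the small parameter so the critical set becomes a circle, apply Fenichel theory away from the fold point to obtain the outer (zero-curvature) profile and the expansion for $\lambda(\alpha)$, and resolve the fold by a Riccati/Airy inner analysis in which the anchoring condition $\ell=0$ selects the \emph{first zero of $\mathrm{Ai}'$} (rather than of $\mathrm{Ai}$, as in the standard fold-exit problem), giving $\alpha_*-\omega=2^{1/3}\sigma_0\,\omega^{5/3}$ at leading order and hence the stated expansion after inversion and unscaling. Where you genuinely depart from the paper is the rigor mechanism at the fold. The paper performs an explicit chain of coordinate changes into the normal form of Krupa--Szmolyan and then invokes their blow-up (geometric desingularization) theorem, adapted to the exit condition $\ell=0$, to control the error; your derivation of the Airy equation is done directly in the $(\alpha,\ell)$ variables via the orbit equation $\rmd\ell/\rmd\alpha$, and you propose to certify the frequency by an elementary sandwiching argument: sub-/super-solutions of the scalar orbit equation built from Airy profiles with shifted Neumann point $z_0=-\sigma_0\pm C\epsilon^{2/3}$, combined with the monotone dependence on $\omega$ and the invariant-region structure already established in the proof of Theorem \ref{theorem large core in polar coordinate}. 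This trade is sensible: the paper's route imports sharp passage estimates (the $K_1$-chart bounds) at the price of normal-form bookkeeping and of adapting a theorem formulated for exit through a fixed section $\ell_3=\rho$, while your route is self-contained, exploits the planarity and monotone shooting structure, and avoids blow-up machinery altogether, at the price of having to verify differential inequalities for glued comparison functions across inner, overlap, and outer regions.

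The one thin spot in your plan---though the paper is comparably terse here---is the error bookkeeping behind the factor $\bigl(1+\rmO(\epsilon^{2/3})\bigr)$. Your shifted-Airy profiles allow only an $\rmO(\epsilon^{2/3})$ displacement of the Neumann point, so you must show that the terms dropped in the Riccati reduction cannot move the crossing by more than that; this is in fact true, but for a structural reason worth making explicit: the curvature terms $-\omega(\alpha^2+\ell^2)+(\alpha^2+\ell^2)^{3/2}$ are even in $\ell$, so no $\rmO(\omega^{1/3})$-relative cross terms of the form $\xi\ell$ arise, and all neglected terms ($\xi\ell^2$, $\ell^4$, $\xi^2$, and $-2\alpha^3\ell$) are of relative size $\rmO(\omega^{2/3})$ in the inner scaling $\ell\sim\omega^{4/3}$, $\xi\sim\omega^{5/3}$. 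Your comparison functions must then absorb these terms pointwise with the margin produced by the shift $C\epsilon^{2/3}$, including in the matching region where $\mathrm{Ai}'/\mathrm{Ai}\sim-\sqrt{z}$; the paper delegates exactly this step to the estimates of Krupa--Szmolyan, and a correct execution of your argument would in fact make the claimed order $\rmO(\alpha_*^{7/3})$ more transparent than the paper's own citation does.
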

In order to understand the singular limit, we set $D=V=1$ and study the limit $\omega,\alpha\to 0$, by rescaling (\ref{vf}) as
\begin{align}\label{vf omega}
    \begin{cases}
        \ell_1 = \frac{\ell}{\omega}\\
        \alpha_1 = \frac{\alpha}{\omega}\\
        \tau_1 = \omega^2 \tau
    \end{cases}
    \Longrightarrow
    &\begin{cases}
        \frac{d \ell_1}{d\tau_1} = -(\alpha_1^2 + \ell_1^2) + \frac{V}{D}(\alpha_1^2 + \ell_1^2)^{3/2} - \omega(2\alpha_1^3 \ell_1 + \alpha_1 \ell_1^3)\\
        \frac{d \alpha_1}{d\tau_1} = -\omega \alpha_1^4\\
        \frac{d \omega}{d\tau_1} = 0.
    \end{cases}
\end{align}
Formally setting $\omega=0$, \eqref{vf omega} possesses  a circle of equilibria $\sqrt{\alpha_1^2+\ell_1^2}=1$. In order to prove Theorem \ref{theorem existence in polar coordidate}, we need to understand the dynamics near this circle for small $\omega>0$.  
We therefore rely on geometric singular perturbation theory, as introduced in Fenichel's seminal paper \cite{MR524817} together with methods from an extension to nonhyperbolic points from  \cite{MR1857972}.

\begin{proof}[\textbf{Proof of Theorem \ref{theorem existence in polar coordidate}}]
{Set $V=D=1$.} Consider equation (\ref{vf omega}). For $\omega = 0$,  the circle $\alpha_1^2 + \ell_1^2 = 1$ clearly is an invariant manifold. Linearizing at an equilibrium $(\ell_1^*,\alpha_1^*)$ on this manifold, we find
\[\begin{pmatrix}
    \frac{d \ell_1}{d\tau_1}\\
    \frac{d \alpha_1}{d\tau_1}
\end{pmatrix} = 
\begin{pmatrix}
    \ell_1^* & \alpha_1^*\\
    0 & 0\\
\end{pmatrix}
\begin{pmatrix}
    \ell_1\\
    \alpha_1
\end{pmatrix}\]
with eigenvalues $\ell_1^*$ and $0$. The manifold is normally hyperbolic in the sense of \cite{MR524817} whenever $\ell_1^*\neq 0$, that is, whenever the zero eigenvalue associated with the tangent space of the manifold is the only eigenvalue on the imaginary axis. In particular, the point $\ell_1^*=0,\alpha_1^*=1$ is the only non normally hyperbolic point. Any compact portion, $\{\ell_1=\sqrt{1-\alpha_1^2},0\leq \alpha_1\leq 1-\delta\}$, $\delta>0$ fixed, then satisfies the assumptions of \cite{MR524817} and persists as a $C^k$ invariant manifold, depending in a $C^k$ fashion on $\omega$ for any finite $k<\infty$. We refer to this manifold as $\mathcal{M}_\omega$ and note that we can write it as a graph 
\[\ell_1=\psi(\alpha_1;\omega)=\sqrt{1 - \alpha_1^2}+\rmO(\alpha_1\omega)
,\]
noticing that the correction terms vanish at $\alpha_1=0$. Next, note that the flow  induced on $\mathcal{M}_\omega$  is monotone in $\alpha$, given by the differential equation in $\alpha_1$ when $\mathcal{M}_\omega$ is written as a graph over the $\alpha$-axis. The shape of the spiral in this regime is of course simply given by 
\[
\ell(r)=\omega\psi(\frac{1}{\omega r};\omega)=\sqrt{\omega^2-\frac{1}{r^2}}+\rmO(\frac{\omega}{r}).
\]
We next wish to track this manifold backward in time $\tau$ until it intersects the $\ell_1=0$-axis. It turns out that a neighborhood of the point $(\alpha_1,\ell_1)=(1,0)$ for $\omega\sim 0$ has been analyzed in  \cite{MR1857972}.  In fact, for $\omega=0$, $\alpha_1$ is constant and can be thought of as a parameter. From this point of view, the manifold of equilibria exhibits a saddle-node bifurcation at the critical parameter-value, precisely the problem analyzed in  \cite{MR1857972}, where a typical bifurcation delay of order $\omega^{2/3}$ was identified together with the exact leading-order precoefficient; see Figure \ref{f:passagefold} for an illustration.  We therefore first perform a sequence of coordinate changes that puts our equation in the normal form studied there. 
\begin{figure}
    \centering
    \begin{subfigure}{0.4\textwidth}
        \includegraphics[width=.85\textwidth]{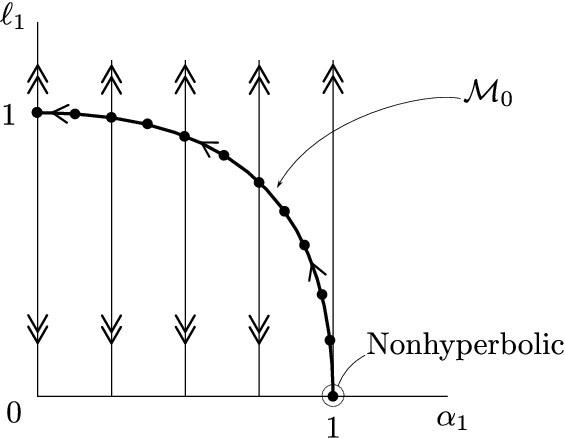}
        \caption{}
    \end{subfigure}
    \hspace{0.3in}
    \begin{subfigure}{0.35\textwidth}
        \includegraphics[width=.85\textwidth]{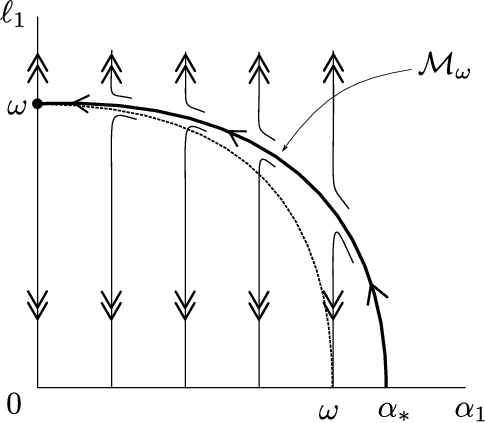}
        \caption{}
    \end{subfigure}
    \caption{Figure (a): The phase portrait of (\ref{vf omega}) with $\omega=0, V=D=1$ and its invariant manifold $\mc M_0$. The point $(\ell_1,\alpha_1) = (0,1)$ is not normally hyperbolic. Figure (b): The invariant manifold $\mathcal{M}_\omega$ where $\omega > 0$.}
    \label{f:passagefold}
\end{figure}

First set $\tilde{\alpha}_1 = \alpha_1 - 1$ to translate the fold point to the origin,
\[\begin{cases}
    \frac{d \ell_1}{d\tau_1} = -\frac{1}{D}((\tilde{\alpha}_1 + \frac{1}{V})^2 + \ell_1^2) + \frac{V}{D}((\tilde{\alpha}_1 + \frac{1}{V})^2 + \ell_1^2)^{3/2} - \omega(2(\tilde{\alpha}_1 + \frac{1}{V})^3 \ell_1 + (\tilde{\alpha}_1 + \frac{1}{V}) \ell_1^3)\\
    \frac{d\tilde{\alpha}_1}{d\tau_1} = -\omega (\tilde{\alpha}_1 + \frac{1}{V})^4\\
    \frac{d\omega}{d\tau_1} = 0.
\end{cases}\]
then expand and rescale as follows,
\begin{align*}
    &\begin{cases}
        \frac{d \ell_1}{d\tau_1} = \tilde{\alpha}_1 + \frac{1}{2} \ell_1^2 + \rmO(\omega,\tilde{\alpha}_1 \ell_1,\ell_1^2,\tilde{\alpha}_1^3)\\
        \frac{d\tilde{\alpha}_1}{d\tau_1} = \omega(-1 + \rmO(\omega,\ell_1,\tilde{\alpha}_1))\\
        \frac{d\omega}{d\tau_1} = 0.
    \end{cases}\\
    \begin{cases}
        \tau_2 = -\tau_1\\
        \ell_2 = -\ell_1\\
        \alpha_2 = -\tilde{\alpha}_1
    \end{cases}
    \Longrightarrow
    &\begin{cases}
        \frac{d \ell_2}{d\tau_2} = -\alpha_2 + \frac{1}{2}\ell_2^2 + \rmO(\omega,\alpha_2 \ell_2,\ell_2^2,\alpha_2^3)\\
        \frac{d \alpha_2}{d\tau_2} = \omega(-1 + \rmO(\omega,\ell_2,\alpha_2))\\
        \frac{d \omega}{d\tau_2} = 0
    \end{cases}\\
    \begin{cases}
        \ell_3 = \lambda \ell_2\\
        \alpha_3 = \mu \alpha_2\\
        \tau_3 = \delta \tau_2
    \end{cases}
    \Longrightarrow
    &\begin{cases}
        \frac{d \ell_3}{d\tau_3} = -\frac{\lambda}{\delta \mu} \alpha_3 + \frac{1}{2 \delta \lambda}\ell_3^2 + \rmO(\omega,\alpha_3 \ell_3,\ell_3^2,\alpha_3^3)\\
        \frac{d \alpha_3}{d\tau_3} = \omega(-\frac{\mu}{\delta} + \rmO(\omega,\ell_3,\alpha_3))\\
        \frac{d \omega}{d\tau_3} = 0.
    \end{cases}
\end{align*}
Choose
\[\lambda = 2^{-2/3},\quad \mu = 2^{-1/3},\quad \delta = 2^{-1/3},
\]
so that
\[\frac{\lambda}{\delta \mu} = \frac{1}{2 \delta \lambda} = \frac{\mu}{\delta} = 1,\]
and obtain
\[\begin{cases}
    \frac{d \ell_3}{d\tau_3} = -\alpha_3 + \ell_3^2 + \rmO(\omega,\alpha_3 \ell_3,\ell_3^2,\alpha_3^3)\\
    \frac{d \alpha_3}{d\tau_3} = -\omega+ \omega \rmO(\omega,\ell_3,\alpha_3))\\
    \frac{d \omega}{d\tau_3} = 0,
\end{cases}
\]
the ``normal form'' for the slow passage studied in \cite{MR1857972}. 

\begin{figure}[ht]
\centering
\begin{subfigure}{0.4\textwidth}
    \includegraphics[width=\textwidth]{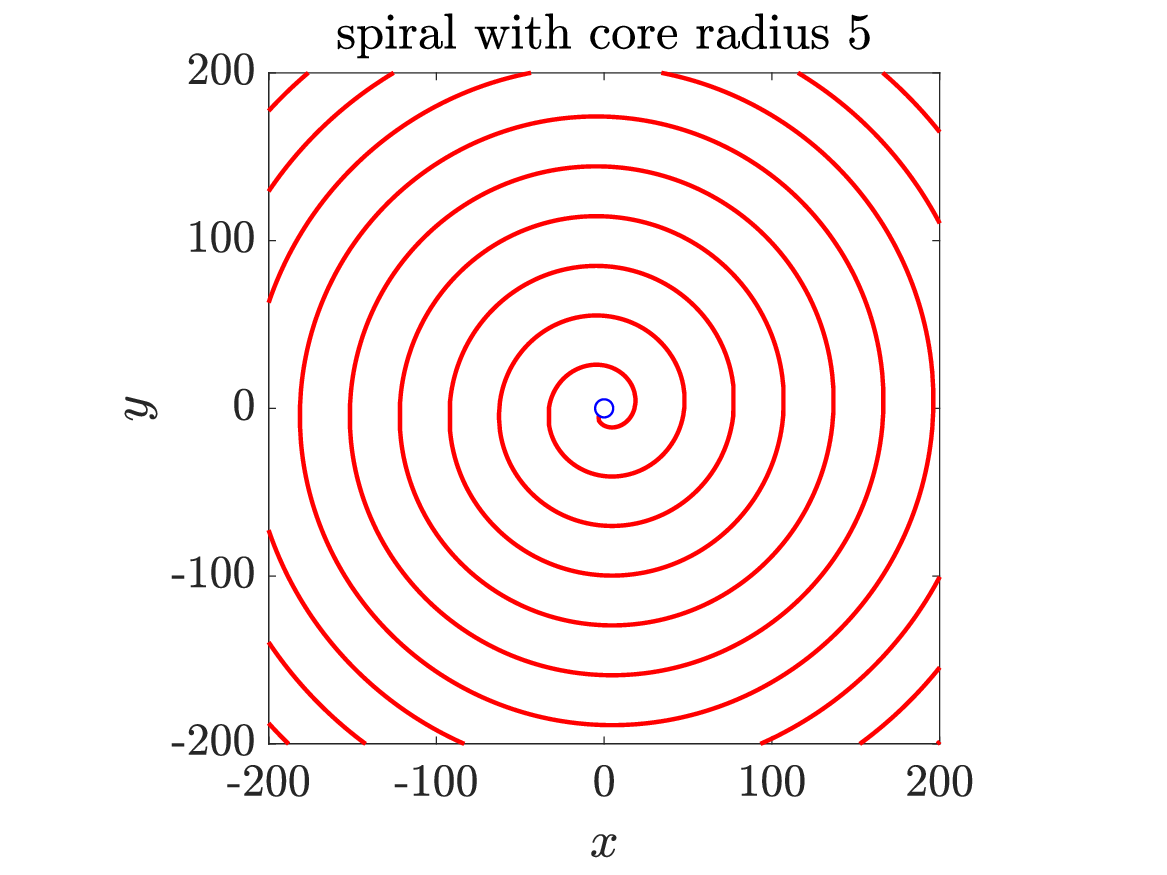}
    \caption{}
\end{subfigure}
\begin{subfigure}{0.4\textwidth}
    \includegraphics[width=\textwidth]{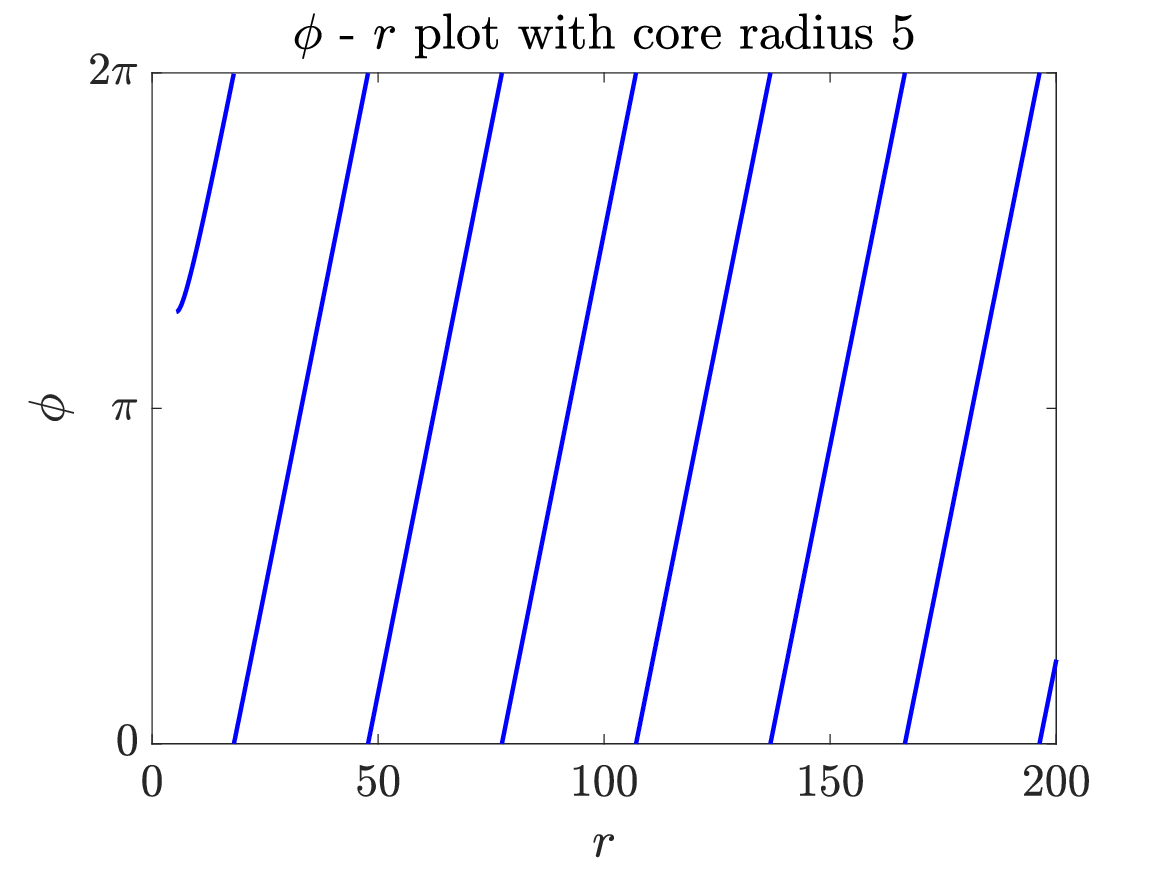}
    \caption{}
\end{subfigure}
\hfill
\begin{subfigure}{0.4\textwidth}
    \includegraphics[width=\textwidth]{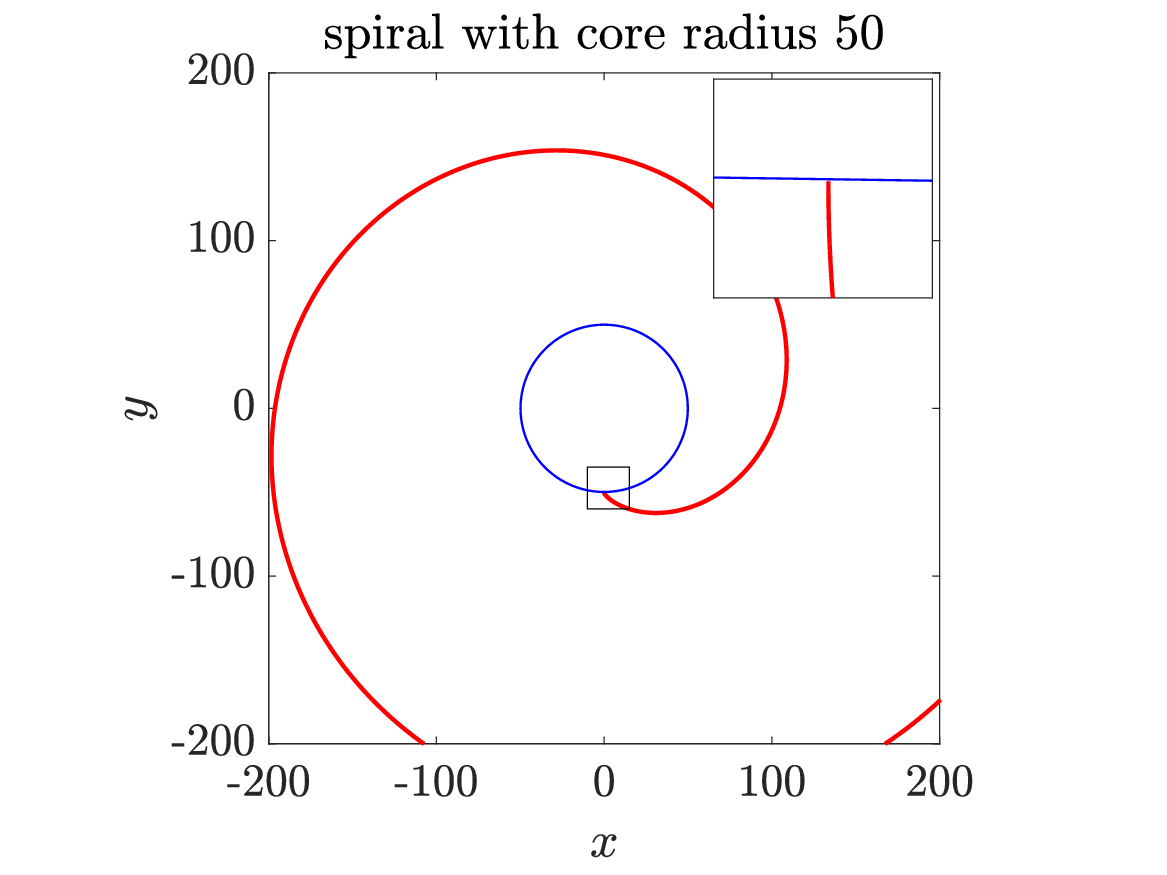}
    \caption{}
\end{subfigure}
\begin{subfigure}{0.4\textwidth}
    \includegraphics[width=\textwidth]{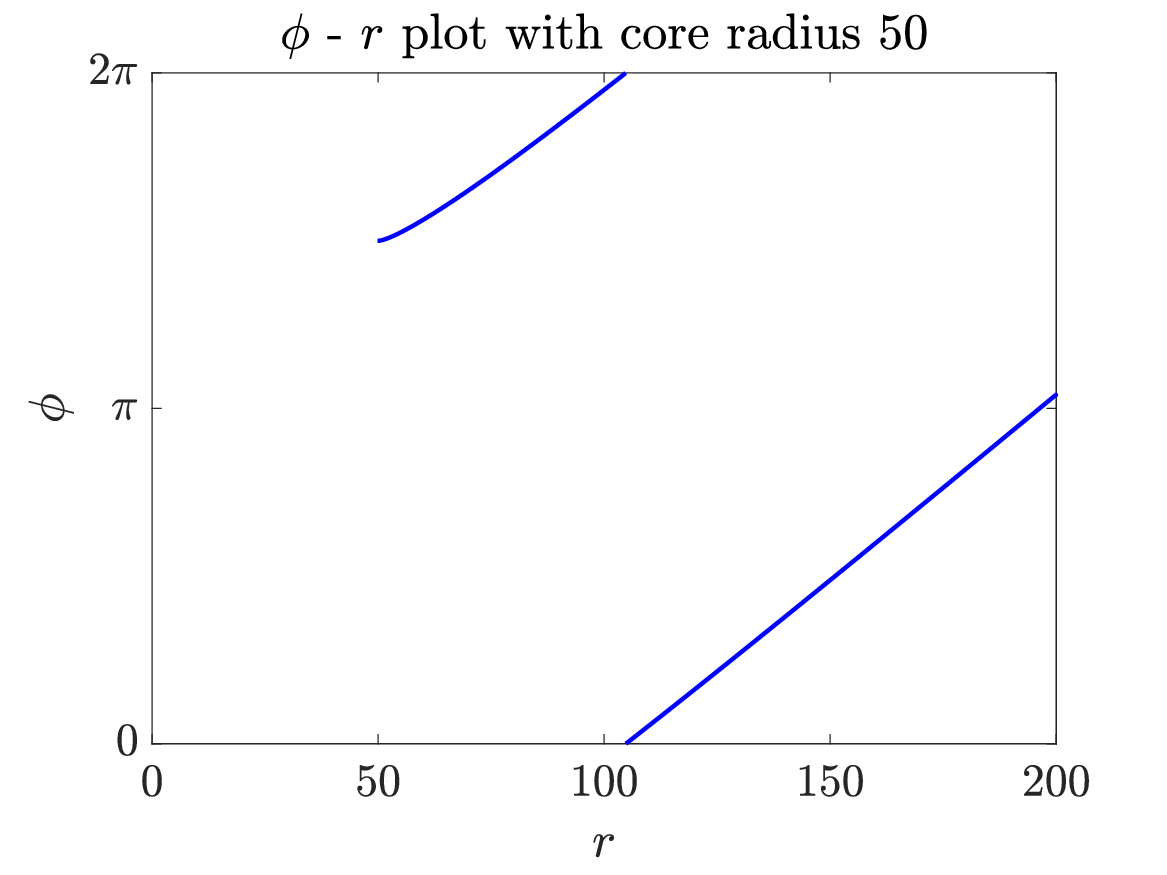}
    \caption{}
\end{subfigure}
        
\caption{Graphs of spirals with various core radii in the plane (left) and in $r-\phi$-plots; for all graphs, $V = D = 1$; (a)-(b): $R = 5$, (c)-(d): $R=50$.}
\end{figure}

\begin{figure}
\centering
\begin{subfigure}{0.4\textwidth}
    \includegraphics[width=\textwidth]{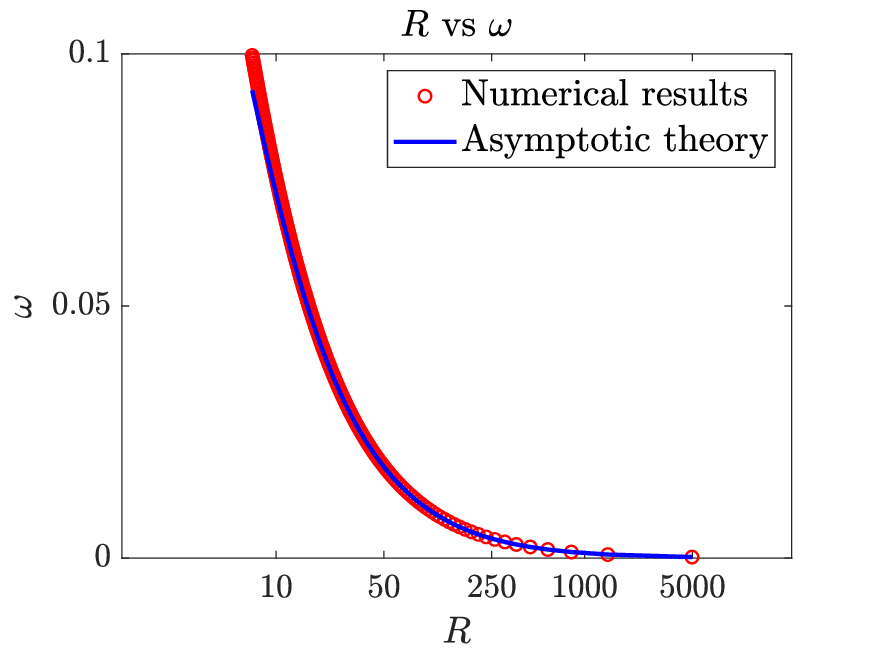}
    \caption{}
\end{subfigure}
\begin{subfigure}{0.4\textwidth}
    \includegraphics[width=\textwidth]{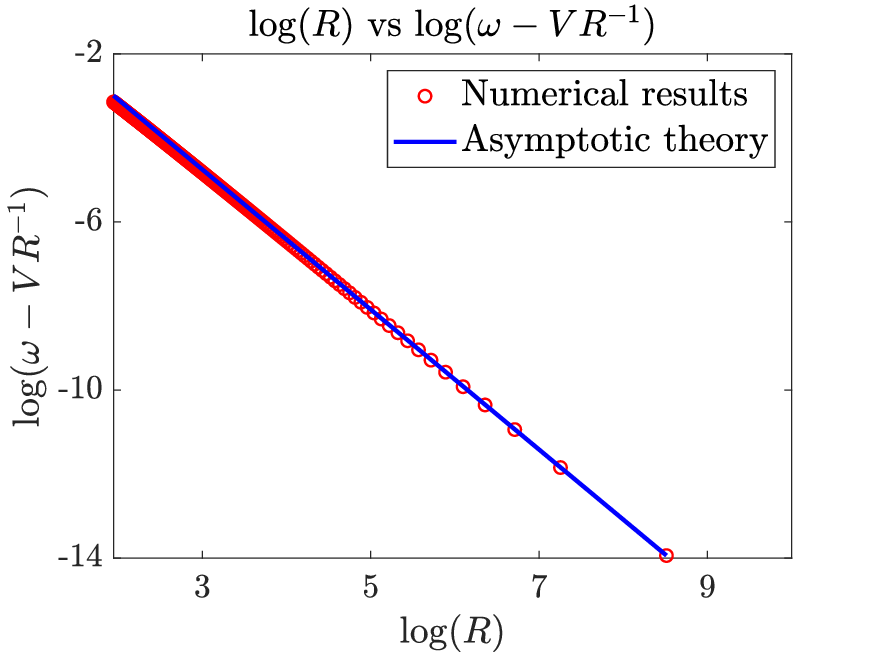}
    \caption{}
\end{subfigure}

\caption{Comparison of the theoretical prediction from Theorem \ref{theorem existence in polar coordidate} and computational results with $V=D=1$. Figure (a): Theoretical expansion $\omega=VR^{-1}-\sigma_0 2^{1/3} D^{2/3} V^{1/3} R^{-5/3}$ (blue solid line) and the numerical results (red dotted line). This shows $\omega = R^{-1}$ asymptotically. Figure (b): Theoretical expansion $\omega-VR^{-1}=-\sigma_0 2^{1/3} D^{2/3} V^{1/3} R^{-5/3}$ (blue solid line) and the numerical results (red dotted line). The $\log$-$\log$ plots coincide on a line of slope $-5/3$ which validates the expansion $\omega = V R^{-1} + \mathcal{O}(R^{-5/3})$. We confirmed that the difference between numerical results and the theoretical prediction in the right-hand plot is of order $R^{-7/3}$.}
\end{figure}

At this point the result needed here differs slightly from the results for the slow passsage, where one studies the exit of the trajectory from a small neighborhood of the origin, that is, the first time a solution hits the section $\ell_3=\rho$ for some $\omega$-independent constant $\rho>0$. This location is found at leading order from the rescaled equation, where $\ell_3=\omega^{1/3}\tilde{\ell}_3$, $\alpha_3=\omega^{2/3}\tilde{\alpha}_3$, $'=\omega^{1/3}\ \tau_3=\omega^{-1/3}\tilde{\tau}_3 $,
\[\begin{cases}
    \frac{d \tilde{\ell}_3}{d\tilde{\tau}_3} = -\tilde{\alpha}_3 + \tilde{\ell}_3^2 + \rmO(\omega^{1/3})\\
    \frac{d \tilde{\alpha}_3}{d\tilde{\tau}_3} = -1+ \rmO(\omega^{1/3})\\
    \frac{d \omega}{d\tau_3} = 0.
\end{cases}
\]
Neglecting the $\rmO(\omega^{1/3})$-terms, we find the Riccati equation  $\frac{d \tilde{\ell}_3}{d\tilde{\tau}_3}=\tilde{\tau}_3+\tilde{\ell}_3^2$, with a unique solution that satisfies the required backward asymptotics $\tilde{\ell}_3\sim \sqrt{-\tilde{\tau}_3}$, given explicitly through 
\[
\tilde{\ell_3}(-\tilde{\tau}_3)=\mathrm{Ai}'(\tilde{\tau}_3)/\mathrm{Ai(\tilde{\tau}_3)},
\]
and $\mathrm{Ai}(\sigma)$ is the unique bounded solution to $u_{\sigma\sigma}-\sigma u=0$. The first zero of $\mathrm{Ai}(\sigma)$ hence determines the blowup time, $\mathrm{Ai}(\sigma_1)=0$, $\tilde{\tau}_3=-\sigma_1$. The geometric desingularization analysis in \cite{MR1857972} shows that this gives indeed yields the leading order contribution to the passage time, that is $\ell(\tau_3)=-\rho$ when $\alpha_3(\tau_3)=-\sigma_1\omega^{2/3}+\rmO(\omega\log(\omega))$. 

In this leading approximation, we find that $\ell(\tau_3)=-\rho$ when at leading order $\alpha_3(\tau_3)=-\sigma_0\omega^{2/3}$, where $-\sigma_0$ is the largest zero of the derivative of the Airy function, $\mathrm{Ai}'(-\sigma_0)=0$, $\mathrm{Ai}(-\sigma)>0$ for $\sigma>\sigma_0$, and   $\sigma_0=1.01879297\ldots$. Inspecting the proof in \cite{MR1857972}, one finds that error terms are determined readily from the passage in the $K_1$-chart, in their notation, with error terms 
\begin{equation}
     \alpha_3(\tau_3)=-\sigma_0\omega^{2/3}+\rmO(\omega).
\end{equation}
Going back through the coordinate changes, we find after a short calculation that the locally invariant manifold $\mc M_\omega$ intersects the line $\ell_1=0$ at $(\ell_1,\alpha_1) = (0,\alpha_*)$ where
\[\alpha_* = \omega + 2^{1/3} \sigma_0 \omega^{5/3} + \rmO(\omega^2).\]
Solving for  $\omega$ in terms of $\alpha_*$, we find, as claimed,
\[
\omega = \alpha_* - 2^{1/3} \sigma_0 \alpha_*^{5/3} + \rmO(\alpha_*^{7/3})
.\]
Using the scaling expressions \eqref{e:scal} for $R_*=1/\alpha_*$ and $\omega$, we now quickly find the expansion as stated. 
\end{proof}
Note that the leading order approximation is given through $\ell=\sqrt{\omega^2/V^2-1/r^2}$. which upon integration gives the Wiener-Rosenblueth spiral,
\begin{equation}\label{e:wr}
  \Phi(r)=  \sqrt{ \frac{\omega^2r^2}{V^2}-1} - \arctan{\sqrt{ \frac{\omega^2r^2}{V^2}-1}}; 
\end{equation}
see also \cite[(3.126)]{MR4656059}\footnote{Note that there is a typo in the sign of the $\arctan$ contribution, there; the associated plots \cite[Fig 3.16]{MR4656059}, there, are for the correct sign that we show here.} and \cite{MR0025140},
\section{Stability}
\label{s:5}
We consider \eqref{pde}, in a corotating frame, 
\begin{equation}\label{e:pde}
    \Phi_t = \frac{Dr\Phi_{rr}-V(1+r^2\Phi_r^2)^{3/2}+Dr^2\Phi_r^3+2D\Phi_r}{r(1+r^2\Phi_r^2)}+\omega_*,\quad r>R,\qquad \Phi_r(R)=0,
\end{equation}
with initial condition $\Phi(r,0)=\phi_*(r)+\varphi(r)$, and with $\phi_*'(r)=\lambda(1/r)$, and   $\omega_*$ from Theorem \ref{theorem existence in polar coordidate}. 
\begin{theorem}\label{theorem stability in polar coordinate}
    For all $\varepsilon>0$, there exists $\delta>0$ so that for all $\varphi\in C^{2}_\mathrm{loc}([R,\infty))$ with 
    \begin{equation}
    \sup_r(|r^{-2}\varphi|+|r^{-1}\varphi_r|+|\varphi_{rr}|)<\delta, \quad \varphi_r(R)=0,
    \label{e:reg}
    \end{equation}
    we have that the solution $\Phi(t,r)$ with initial condition $\phi_*(r)+\varphi(r)$ to \eqref{e:pde} satisfies 
    \[
      \|\Phi(t,\cdot)\|_{C^0}<\varepsilon \quad \text{for all } t>0.
    \]
\end{theorem}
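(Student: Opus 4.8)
The plan is to exploit the comparison (or avoidance) principle inherent to the curvature flow \eqref{e:pde}, combined with the translation symmetry $\Phi\mapsto\Phi+c$ that encodes the rotational invariance. Writing the right-hand side of \eqref{e:pde} as $a(r,\Phi_r)\Phi_{rr}+b(r,\Phi_r)+\omega_*$ with $a(r,p)=D/(1+r^2p^2)>0$, the equation is parabolic (degenerately so as $r\to\infty$, where $a\sim r^{-2}$), and since the nonlinearity depends on $\Phi$ only through $\Phi_r$ and $\Phi_{rr}$, every shift $\phi_*+c$ of the profile from Theorem \ref{theorem existence in polar coordidate} is again a steady state obeying the Neumann condition $(\phi_*+c)_r(R)=0$. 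These shifts form a totally ordered one-parameter family of steady solutions, and the backbone of the argument is a comparison principle for \eqref{e:pde} on $[R,\infty)\times(0,\infty)$: ordered initial data, $\underline\Phi(\cdot,0)\le\overline\Phi(\cdot,0)$, with both respecting the boundary condition at $r=R$, stay ordered for all $t>0$.

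First I would record local well-posedness for data in the weighted class \eqref{e:reg} from standard quasilinear parabolic theory, and check that $\phi_*$ is a steady state, $F(r,\phi_*',\phi_*'')+\omega_*\equiv0$. To prove the comparison principle, the key structural observation is that the difference $w=\overline\Phi-\underline\Phi$ of two solutions satisfies, by the mean value theorem in the arguments $(\Phi_r,\Phi_{rr})$, a linear parabolic inequality $w_t\ge\tilde a(r,t)\,w_{rr}+\tilde b(r,t)\,w_r$ with $\tilde a>0$ and, crucially, with no zeroth-order term --- a direct consequence of the translation invariance. The maximum principle therefore applies with no sign condition on a potential. At the anchored core $r=R$, the Neumann condition together with Hopf's lemma prevents the boundary from hosting a strict spatial extremum, so no spurious extrema appear there.

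Granting the comparison principle, stability for bounded perturbations is immediate: if $|\varphi|\le c$ then $\phi_*-c\le\Phi(\cdot,0)\le\phi_*+c$, and sandwiching between the two steady states $\phi_*\pm c$ yields $|\Phi(t,r)-\phi_*(r)|\le c$ for all $t>0$ and $r\ge R$. The actual content of Theorem \ref{theorem stability in polar coordinate} is that the weighted smallness in \eqref{e:reg} suffices, which forces us to accommodate data that grow at infinity, $|\varphi(r)|<\delta r^2$. I would therefore replace the constant shifts by refined barriers of the form $\phi_*(r)+c\pm\delta\,w(r)$ with $w$ growing quadratically, and verify the super- and sub-solution inequalities directly from \eqref{e:pde}; here the monotonicity of $F$ in the curvature argument must be balanced against the extra slope and curvature carried by $w$, and the admissible shape of $w$ is dictated by the degeneracy $a\sim r^{-2}$ at infinity. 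This step amounts to a weighted maximum principle for $L=a\,\partial_r^2+b\,\partial_r$, showing that the weighted quantity controlled in \eqref{e:reg} does not increase along the flow, which is exactly what is needed to convert weighted initial smallness into the stated bound on $\Phi(t,\cdot)$.

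The main obstacle is precisely the behavior at $r=\infty$: the domain is unbounded, the diffusion degenerates like $r^{-2}$, and the admissible perturbations are themselves unbounded, so the maximum principle cannot be invoked naively --- one must rule out loss of the extremum ``at infinity.'' I expect this to require a Phragm\'en--Lindel\"of type argument tailored to the degenerate operator $L$, equivalently the construction of the growing barriers $w$ above with the sharp growth rate matched to $a\sim r^{-2}$. Once such barriers are produced and the weighted maximum principle is established, the remaining passage from the weighted control to the conclusion of Theorem \ref{theorem stability in polar coordinate} is routine.
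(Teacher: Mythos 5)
Your core mechanism --- sandwiching the solution between the ordered steady states $\phi_*+c$ via a comparison principle that exploits the absence of a zeroth-order term --- is exactly the second ingredient of the paper's proof, and that part of your outline is sound. But you dispose of the first ingredient in one clause (``local well-posedness \dots from standard quasilinear parabolic theory''), and this is where the paper has to do real work, because standard theory does \emph{not} apply here. Writing $\Phi=\phi_*+\varphi$, the perturbation satisfies \eqref{e:pdepert}, whose principal part is the degenerate diffusion $\frac{1}{r^2}\varphi_{rr}$ while the drift is $-\varphi_r$ with coefficient of order one; the drift is therefore \emph{not} relatively bounded by the diffusion (heuristically, on scales $r\sim L$ with wavenumber $k$ one has $k \gg k^2/L^2$), so the quasilinear terms $\varphi_r\varphi_{rr}$ cannot be treated perturbatively around a sectorial operator. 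This is not a technicality: the linearization $\frac{1}{r^2}\partial_{rr}-\partial_r$ has essential spectrum covering the whole imaginary axis (see the paper's Discussion around \eqref{e:lin}). The paper's resolution is a time-dependent change of the independent variable, $\rho=r+\log r-t$ for $r\ge e^t+1$ (suitably interpolated), which rides along the outward transport and converts \eqref{e:pdepert} into \eqref{e:pderho}, where the drift now carries a factor $1/\rho$; only then does $\mathcal{A}=\frac{1}{\rho^2}\partial_{\rho\rho}$ with Neumann conditions (conjugated via $v=r^2u$, $r=s^2$ to $\partial_{ss}$) generate an analytic semigroup with maximal regularity on the weighted spaces, giving local existence. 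Your proposal contains no substitute for this step, even though you yourself note the degeneracy $a\sim r^{-2}$.

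The second gap is the clause ``for all $t>0$''. Comparison arguments presuppose that the solution exists on the time interval in question, and for a quasilinear equation a $C^0$ bound from sandwiching does not rule out finite-time gradient blow-up. The paper closes this by observing that $\psi_\rho$ itself solves a parabolic equation and comparing it against the \emph{derivative} profiles $\phi_{*,r}(\cdot;\omega_\pm)-\phi_{*,r}(\cdot;\omega_*)$ built from the larger family of rotating solutions at nearby frequencies $\omega_-<\omega_*<\omega_+$ (these have ordered slopes and inhomogeneous Neumann data, so they bracket $\psi_\rho$); the resulting uniform gradient bound turns \eqref{e:pderho} into a uniformly parabolic linear equation with bounded coefficients, whence $T_+=\infty$. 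Your constant shifts $\phi_*+c$ control only $\Phi$, never $\Phi_r$, so you have no mechanism for continuation. Finally, your last step --- quadratically growing barriers $\phi_*+c\pm\delta w(r)$ to accommodate data with $|\varphi|<\delta r^2$ --- cannot deliver the stated uniform conclusion: a perturbation that genuinely grows like $\delta r^2$ remains of that size under such barriers, so the ``routine passage from weighted control to the conclusion'' is not routine; the concluding comparison (in the paper: with the constants $\psi\equiv\pm\varepsilon$) requires the perturbation to be uniformly small, and any complete argument must either work in that class or obtain a genuinely stronger decay statement than your barriers provide.
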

We only outline a proof here and relegate a more thorough discussion of regularity and actual asymptotics to forthcoming work. The proof relies on two ingredients:
\begin{enumerate}
    \item local well-posedness and regularity: there exists a unique global solution to \eqref{e:pde} for initial data of the form specified in \eqref{e:reg};
    \item solutions to \eqref{e:pde} obey a comparison principle with sub- and super-solutions giving a priori bounds on $\Phi$ and $\Phi_r$. 
\end{enumerate}
From these two observations, one concludes that the solution with initial condition $\phi_*+\varphi$ is pointwise bounded by the translated solutions $\phi_*(r)\pm\varepsilon$ for all times, thus establishing the claim. 

We first discuss local well-posedness. We write \eqref{e:pde} in the general form 
\begin{equation}\label{e:pdea}
\Phi_t=a(\Phi_r,r)\Phi_{rr}+b(\Phi_r,r)-\omega_*, \qquad \Phi_r(R)=0,
\end{equation}
which possesses the family of equilibrium solutions $\phi_*(r)+\gamma$, $\gamma\in\R$. In fact, the family of solutions with different $R$- and $\omega$-values, $\phi_*(r;\omega)$ provide a yet larger family of solutions of the equation, with $\phi_{*,r}(r;\omega_1)>\phi_{*,r}(r;\omega_2)$ when $\omega_1>\omega_2$, and satisfying inhomogeneous Neumann boundary conditions,  $\phi_{*,r}(R;\omega)(\omega-\omega_*)>0$. Expansions at infinity readily give $\phi_*(r;\omega)=\omega r + \lambda_1 \log r+\rmO(1)$, where we set $D=V=1$ and where $\lambda_1$ is from Proposition \ref{p:inf}, and $\phi_{*,r}(r;\omega)=\omega+\rmO(1/r)$. 

Inserting the ansatz $\Phi=\phi_*(r;\omega_*)+\varphi$ into \eqref{e:pdea} gives an equation of the form 
\begin{equation}\label{e:pdepert}
\varphi_t=\tilde{a}(\varphi_r,r)\varphi_{rr} + \tilde{b}(\varphi_r,r)\varphi_r,
\end{equation}
where the coefficients have expansions
\begin{align*}
\tilde{a}(\varphi_r,r)&=\frac{1}{r^2} (1+\tilde{a}_1(\varphi_r,r)),\\
\tilde{b}(\varphi_r,r)&=(-1 + \frac{1}{r^2} \tilde{b}_1(\varphi_r,r)).
\end{align*}
The equation \eqref{e:pdepert} is not obviously well-posed since the drift term $-\varphi_r $ is not bounded relative to the degenerate diffusion term $\frac{1}{r^2}\varphi_{rr}$, which poses potential difficulties when treating quasilinear terms of the form $\varphi_r\varphi_{rr}$ perturbatively. 
We therefore eliminate the drift term by means of a time dependent coordinate change in the independent variable, setting
\begin{equation}
    \rho=\left\{\begin{array}{ll} r+\log r -t,& r\geq e^t+1,\\
    r,& r\leq e^t,
    \end{array}\right.
\end{equation}
smoothly interpolated on $r\in(e^t,e^t+1)$. 
One finds $\rho_r=1+\rmO(1/r)$, for all $r$,  uniformly in $t$, and $\rho_t=-1$ on $r\geq e^t$, so that in the new coordinates with $\psi(\rho(r,t),t)=\varphi(r,t)$, 
\begin{equation}\label{e:pderho}
\psi_t=\alpha(\psi_\rho,\rho,t)\psi_{\rho\rho} + \beta(\psi_\rho,\rho,t)\psi_\rho,
\end{equation}
with 
\begin{align*}
\alpha(\psi_\rho,\rho,t)&=\frac{1}{\rho^2} (1+\alpha_1(\psi_\rho,\rho,t)),\\
\beta(\psi_\rho,\rho,t)&= \frac{1}{\rho} \beta_1(\psi_\rho,\rho,t),
\end{align*}
and error terms $\alpha_1=\rmO(1/\rho,\Psi_\rho)$ and $\beta_1$ uniformly bounded on compact sets for any finite time interval $t\in[0,T]$.

The principal part of \eqref{e:pderho} is given through the linear equation
\[
\psi_t=\frac{1}{\rho^2}\psi_{\rho\rho},\qquad \rho\geq R, 
\]
with Neumann boundary condition. The linear operator $\mathcal{A}=\frac{1}{\rho^2}\partial_{\rho\rho},$ with Neumann boundary conditions  generates an analytic semigroup with maximal regularity properties \cite{lunardi} on $X=C^0_{-2}([R,\infty))$, where $C^0_{m}$ is the subset of $C^0_\mathrm{loc}$ with bounded norm
\[
\|u\|_{C^0_{-m}}=\sup_{r\geq R}|r^{m}u(r)|.
\]
The  domain is given through
\[
\mathcal{D}(\mathcal{A})=\{u\in C^0_{-2},\quad u_r\in C^0_{-1}, \quad u_{rr}\in C^0\}.
\]
These facts can be readily inferred by setting $v(r)=r^2 u(r)$, inspecting the conjugate equation for $v$, and introducing the quadratic variable $r=s^2$, which altogether identify $\mathcal{A}$ as conjugate to $\partial_{ss}$ up to lower-order terms. 

One now finds that the equation is well-posed as a quasilinear equation on $\mathcal{D}(\mathcal{A})$
%
%
with smooth solution on a small time interval $t\in[0,\delta]$. Using standard techniques, we extend the solution then to a maximal time interval of existence $t\in[0,T_+)$.

We claim that $T_+=\infty$.  For this, note that the solution is smooth on $t\in(0,T_+)$ and the derivative $\psi_\rho$ solves a parabolic equation. Using the sub-and super-solutions $\phi_{*,r}(r;\omega_\pm)-\phi_{*,r}(r;\omega_*)$ with $\omega_+>\omega_*>\omega_-$, transformed to $\rho$-coordinates, we readily conclude that $\psi_\rho$ is uniformly bounded on $r\geq R, t\in[0,T_*)$. Inserting this result into \eqref{e:pderho}, we find a linear parabolic equation with bounded continuous coefficients for $\psi$ which shows that $\psi$ has a limit at $t=T_*$ and can thus be extended. This shows that in fact $T_*=+\infty$ and our solution is global. 
Comparison with constants $\psi\equiv \pm\varepsilon$ gives the desired stability.

\section{Discussion}

We established existence, uniqueness, and stability of anchored  spiral waves in a driven curvature flow approximation. Anchoring in our context is reflected in requiring that the curve is perpendicular to the circular boundary of a core region. Existence relies on a somewhat explicit phase plane analysis and a shooting argument. The existence argument can easily be generalized to different boundary conditions, requiring for instance that the curve meet the core region at a fixed angle, possibly different from $\pi/2$. In our shooting approach, this leads to a boundary condition $\ell=\alpha_* \tan\beta$. Invariant regions can then be modified in a somewhat straightforward fashion as follows. For negative $\beta$, one modifies the boundary of $G_\omega$ replacing the horizontal axis $\ell=0$  by $\ell=\alpha_* \tan\beta$, and the upper boundary by a sufficiently steep line with large intercept. For positive $\beta$, one needs to enlarge $B_\omega$ to include a region below part of the line $\ell=\alpha_* \tan\beta$, while retaining the upper boundary of $G_\omega$. Interestingly, the asymptotics in \eqref{theorem existence in polar coordidate} change: for $\beta>0$, the boundary layer is fully developed and $\sigma_0$ is replaced by $\sigma_1$, obtained from the first zero of the Airy function, $\mathrm{Ai}(-\sigma_1)=0,\ \mathrm{Ai}(-\sigma_1)>0$ for $\sigma<\sigma_1$, $\sigma_1=-2.338107410\ldots$; for $\beta>0$, there is no boundary layer, $\omega_*$ has a smooth expansion in $\alpha_*$,   $\omega_*=V\alpha_*+\rmO(\alpha_*^2)$. Geometrically, the difference is a non-monotonicity of the angle in the radius when $\beta<0$, which induces the boundary layer. It would be interesting to study more general, possibly dynamic, boundary  conditions, preserving well-posedness of the PDE, that could lead to saddle-nodes or even Hopf bifurcations and thereby mimic retracting wave or meandering bifurcations.  Similarly, a description including curves that are not graphs over the radial variable might shed light on global dynamics.

Stability relies on a comparison principle in an equation of the rough form 
\begin{equation}\label{e:lin}
\Phi_t=\frac{1}{r^2}\Phi_{rr} - \Phi_r, \quad r\geq R, \qquad \Phi_r=0, \quad r=R.
\end{equation}
Interestingly, it appears that diffusion in this equation is not strong enough to induce decay of localized perturbations. Indeed, heuristically arguing that a small localized perturbation travels with constant speed $1$ towards $r=\infty$, diffusion acts locally on this disturbance with diffusion coefficient $\frac{1}{(t+1)^2}$. In the diffusion equation $\Phi_t=\frac{1}{(1+t)^{2}}\Phi_{rr}$, however, one readily substitutes a new time variable $\tau=1-\frac{1}{1+t}$, to obtain $\tilde{\Phi}_\tau=\tilde{\Phi}_{rr}$, so that  $\Phi_\infty(r)=\tilde{\Phi}(\tau=1,r)$ gives the asymptotic shape, 
\[
\Phi(t,r)\sim \Phi_\infty(r-t), \qquad \text{as } t\to\infty. 
\]
This lack of ``healing'' of an interface has been well studied in the context of reaction-diffusion systems with radial, outward-propagating interfaces; see for instance \cite{roussier1,roussier2}. Our approach circumvents this difficulty by providing rough bounds using comparison functions, and not attempting to capture asymptotics. 

We notice however that the simple curvature model neglects the effect of interaction between spiral arms, which may induce additional damping. In fact, the essential spectrum of the linear operator in \eqref{e:lin} contains the entire imaginary axis, while the essential spectra of spiral waves typically contain parabolic arcs $\Gamma_0\{\lambda=\rmi k - d k^2,k\sim 0\}$ as well as vertical translates $\Gamma_0+\rmi\omega_*\Z$ with constant $d$ induced by the diffusive interaction of wave trains in the radial direction; see \cite{MR4580296}. Neglecting this interaction in our curvature approximation therefore leads to $d=0$, spectrum on the entire imaginary axis, and a lack of decay of localized perturbations. 

We hope that the study here can contribute towards a better understanding of the damping properties in spiral wave dynamics, and ultimately help extend Claudia Wulff's insightful early work on spiral meanders \cite{wulffthesis,wulffthesispub} towards Archimedean spiral waves. 

%

\end{document}